\theoremstyle{definition}
\newtheorem{assumption}{Assumption}
\theoremstyle{plain}
\newtheorem{theorem}{Theorem}
\newtheorem{corollary}[theorem]{Corollary}
\newtheorem{lemma}[theorem]{Lemma}
\newtheorem{proposition}[theorem]{Proposition}
\DeclareMathOperator{\Var}{Var}
\definecolor{mycolor2}{RGB}{218,165,32}
 \numberwithin{equation}{section}
\newcommand{\init}{\mbox{{\tiny init}}}
\newcommand{\corr}{\mbox{{\tiny corr}}}
\newcommand{\vertiii}[1]{{\left\vert\kern-0.25ex\left\vert\kern-0.25ex\left\vert #1 
    \right\vert\kern-0.25ex\right\vert\kern-0.25ex\right\vert}}
\title{Statistical significance in high-dimensional linear mixed models}
\author[a]{Lina Lin}
\author[b]{Mathias Drton}
\author[c]{Ali Shojaie}
\affil[a]{Department of Statistics, University of Washington}
\affil[b]{Department of Mathematics, Technical University of Munich}
\affil[c]{Department of Biostatistics, University of Washington}
\date{\today}
\begin{document}
\maketitle

\begin{abstract}
  This paper concerns the development of an inferential framework for
  high-dimensional linear mixed effect models. These are suitable
  models, for instance, when we have $n$ repeated measurements for $M$
  subjects. We consider a scenario where the number of fixed effects
  $p$ is large (and may be larger than $M$), but the number of random
  effects $q$ is small. Our framework is
  inspired by a recent line of work that proposes de-biasing penalized
  estimators to perform inference for high-dimensional linear models
  with fixed effects only. In particular, we demonstrate how to
  correct a `naive' ridge estimator in extension of work by
  \cite{buhlmann2013} to build asymptotically valid confidence
  intervals for mixed effect models. We validate our theoretical
  results with numerical experiments, in which we show our method
  outperforms those that fail to account for correlation induced by
  the random effects.  For  a practical
  demonstration we consider a riboflavin production dataset that exhibits group
  structure, and show that conclusions drawn using our method are
  consistent with those obtained on a similar dataset without group
  structure.
\end{abstract}

\section{Introduction}
Modern statistical problems are increasingly high-dimensional, with
the number of covariates $p$ potentially vastly exceeding the sample
size $N$. This is due in part to technological advances that
facilitate data collection. For instance, we are now able to measure the expression of many genes in a given specimen at little cost. However, it often remains expensive to have many replicates/species to experiment on, resulting in $N \ll p$. 

Fortunately, significant progress has been made in developing rigorous statistical tools for tackling such problems. While earlier work largely targeted point estimation and/or variable selection, recent years have seen a number of proposals on how to also assign uncertainty, statistical significance and confidence in high-dimensional models. This is of great practical importance, particularly when interpretation of parameters and variables is of key priority. 

Early attempts are highly varied in their approach. Stability
selection was proposed by \cite{meinshausen2010} as a generic method
for controlling the expected number of false positive selections; with
improvements given by \cite{shah2013}. Sample splitting, where a first subsample is used to screen, and a second subsample is used to perform inference  \citep{wasserman2009,meinshausen2009split} has also been explored.
Taking an alternative approach, \cite{lockhart2014}, \cite{tibshirani2014} and \cite{lee2016} build a framework for conditional inference for high-dimensional linear models, i.e., conduct inference given some covariates have been selected. 

In this paper, we propose an \emph{unconditional} inferential framework for high-dimensional linear mixed effect models, with the goal of testing null hypotheses of the form 
\begin{equation}\label{eqn:null}
H_{0, G} ~:~ \beta^*_j = 0 ~\mbox{for all $j \in G$}
\end{equation}
where $\beta^* \in \mathbb R^p$ is the vector of fixed effect
regression coefficients, and $G$ may be any subset of
$\{1, \ldots , p\}$. Of particular interest is the case $G = \{ j \}$,
i.e., testing if a single fixed effect coefficient $\beta^*_j$ is
zero. A related goal is to construct confidence intervals for
$\beta^*_j$, $j = 1, \ldots, p$. This problem arises naturally in many
settings, as observations are rarely independent. A prime example is
the analysis of longitudinal data, which is highly prevalent in
clinical studies. In such settings mixed effect models are a natural
extension of linear models for modeling data exhibiting
group-structured dependence.

Our framework is inspired by a recent line of work that proposes
de-biasing penalized estimators as an approach to inference for
high-dimensional linear models with fixed effects only. There, the
limiting distribution of the modified estimator is tractable and,
thus, can be used to construct approximate $p$-values and confidence
intervals. For example, in high-dimensional linear regression,
\cite{zhang2014}, \cite{vdg2014}, and \cite{javanmard2014} suggest
de-sparsifying the lasso: starting with the biased lasso estimator, the
authors `invert' the corresponding Karuhn-Kush-Tucker (KKT)
optimality conditions to form an estimator that is approximately
unbiased for $\beta^*$ and normally distributed. 
By construction, the de-biased estimator can then be used to derive confidence intervals and $p$-values. \cite{ning2017} extended this strategy by developing a score test for inference in penalized $M$-estimators. 

Our proposed method bears strongest resemblance to
\cite{buhlmann2013}. Developed for high-dimensional linear models, the
framework of \cite{buhlmann2013} is similar to those put
forth by \cite{zhang2014}, \cite{vdg2014}, and \cite{javanmard2014},
except it uses ridge estimation as a starting point. While the overall
framework we consider is similar, there are important differences in the specifics
on how to correct --- or rather, approximately correct --- for the
bias in the ridge estimator, and how to compute an approximation of
the limiting distribution of the de-biased estimator, to
construct $p$-values and confidence intervals for elements in
$\beta^*$. As will be evident later, these differences are 
direct results of having to cope with dependencies induced by the
random effects in the linear mixed effect model. The naive treatment
of ignoring the dependencies, as we demonstrate in numerical examples, leads to poor practical performance (particularly, when inverting estimator to obtain confidence intervals, the confidence intervals have insufficient coverage).  We address this issue by introducing a two-stage procedure that yields consistent estimates of the parameters that determine these dependencies. While we describe a ridge-based framework, the methodology could be extended to make use of other high-dimensional estimators as the starting point for constructing a de-biased estimator.

Our decision to use a ridge estimator is based on simulation findings for standard linear models showing that while asymptotically optimal, confidence intervals from $\ell_1$-based de-biasing \citep{zhang2014, vdg2014, javanmard2014} tend to have coverage problems in finite samples. \cite{yu2018} similarly noticed that confidence intervals based on a de-biased $\ell_1$-estimator for high-dimensional Cox model had poorer than theoretical coverage in practice. Although its theoretical justification is similar, the ridge-based method of \cite{buhlmann2013} yields better finite-sample error control. 

Our paper is organized as follows. The remainder of this section provides a brief overview of the subsequent notation. Section \ref{sec: model} makes explicit the form of the high-dimensional linear mixed effect model we are working with. In Section \ref{sec: method}, we describe the details of our method: specifically, how it builds upon \cite{buhlmann2013} to accommodate dependence within groups induced by the random effects. We also present theory, along with the required assumptions, to justify it.  Numerical experiments can be found in Section \ref{sec: simulations}, followed by a practical application of the method in Section \ref{sec: prac}. We conclude with a discussion and elaborate on potential extensions in Section \ref{sec: discussion}.  
Proofs are collected in the Appendix. 

\subsection*{Notation}
Matrices are written in upper-case bold-face and their entries in corresponding lower-case. So $a_{jk}$ is the $(j,k)$th entry of matrix
$\mathbf A \in \mathbb R^{n_1 \times n_2}$.  For
$j \in \{1, \ldots , n_2\}$ and $J \subseteq \{1, \ldots , n_2\}$, $a_j$ and $\mathbf A_J$ denote the $j$th column of
$\mathbf A$ and the column-wise concatenation of columns in
$\mathbf A$ indexed by the set $J$, respectively.  The $i$th row of
$\mathbf A$ is denoted $a^{(i)}$.
For $r \in [1, \infty]$, the $\ell_r$ norm of a vector
$u \in \mathbb R^{n}$ is
$
\|u\|_r = \big(\sum_{i=1}^p |u_i|^r \big)^{1/r}, 
$
and the induced norm of a matrix $\mathbf A \in \mathbb R^{n_1 \times n_2}$ is 
$
\vertiii{\mathbf A}_r = \sup \left\{\|\mathbf Ax\|_r: x \in \mathbb R^{n_2}, \|x\|_r = 1 \right\}.
$
With this notation, $\vertiii{\mathbf A}_2$ is the spectral norm,  $\vertiii{\mathbf
  A}_1$ the maximum absolute column sum of the matrix, and
$\vertiii{\mathbf A}_\infty$ the maximum absolute row sum of the
matrix. We use $\|\mathbf A\|_r$ to denote the $\ell_r$ norm of the
vectorization of $\mathbf A$.

The projection of $\mathbb R^{n_2}$ onto the linear space generated by
the rows of $\mathbf A$ is denoted
$
\mathbf P_{\mathbf A} = \mathbf A(\mathbf A^T \mathbf A)^{-}\mathbf A^T, 
$
where ${\mathbf A}^-$ is the Moore-Penrose inverse of ${\mathbf A}$.  For square matrices $\mathbf A_1$ and $\mathbf A_2$ of the same dimensions, $\mathbf A_1 \leq \mathbf A_2$ indicates that $\mathbf A_2 - \mathbf A_1$ is positive semi-definite.

For real-valued functions $g_1(x)$ and $g_2(x)$ defined on $(0,\infty)$, we write $g_1(x) \lesssim g_2(x)$
if there is a constant $c \in (0, \infty)$ such that
$g_1(x) \le cg_2(x)$, and $g_1(x) \gtrsim g_2(x)$ if instead $g_1(x) \ge cg_2(x)$.  We write
$g_1(x) \asymp g_2(x)$ if both $g_1(x) \lesssim g_2(x)$ and
$g_1(x) \gtrsim g_2(x)$.  Then, $g_1(x) = o(g_2(x))$ if
$g_1(x)/g_2(x) \to 0$ as $x \to \infty$, and $g_1(x) = O(g_2(x))$ if
there is a $c \in (0, \infty)$ such that
$|g_1(x)|\le c g_2(x)$ for all $x$ large enough. The latter relations
also apply when $x$ is a vector, where $x \to \infty$ is interpreted elementwise.
%
Finally, if $X \in \mathbb R$ is a random variable and $a \in \mathbb R$ is some constant, we write
$
| X - a| = o_P(1) 
$
if $X$ converges to $a$ in probability, i.e., $X \to_p a$.

\section{The linear mixed effect model}\label{sec: model}
Consider $M$ groups of observations of sizes $n_1,\dots,n_M$. Let $m = 1, \ldots , M$ be group indices, and let $i = 1, \ldots , n_m$ index the observations within group $m$. Let $N$ be the total number of observations, so $N = \sum_{m=1}^M n_m$. We may later assume, without loss of generality, that $n_m = n$ for all groups, or that, $N = nM$. The proposed framework allows for non-uniform group sizes with minor adjustments, so long as the group sizes are of the same order. 

For group $m \in \{1, \ldots , M\}$, we observe the response vector $y_m \in \mathbb R^{n}$, generated as
\begin{align}\label{eq: mixed}
y_m &= \mathbf X_m \beta^* + \mathbf Z_m \upsilon_m + \epsilon_m, \quad m = 1, \ldots , M
\end{align}
with
\begin{enumerate}[label=(\roman*)] 
\item $\beta^* \in \mathbb R^p$, an unknown vector of fixed regression coefficients; 
\item $\upsilon_m \in \mathbb R^q$, $m = 1, \ldots , M$ vectors of group-specific random effects, with $\upsilon_m \underset{i.i.d.}{\sim} \mathcal N(0, \mathbf \Psi^*)$, $\mathbf \Psi^*$ an unknown $q \times q$ positive definite covariance matrix; 
\item errors $\epsilon_m \underset{i.i.d.}{\sim} \mathcal N(0, \sigma^{*2}\mathbf I_{n \times n})$ for unknown $\sigma^{*2}$, which are independent of $\upsilon_1, \ldots , \upsilon_M$; and
\item $\mathbf X_m \in \mathbb R ^{n \times p}$ and $\mathbf Z_m \in \mathbb R^{k \times q}$ known design matrices.  
\end{enumerate}
By construction, $\beta^*$ represents effects shared across groups while $\upsilon_m$, $m = 1, \ldots , M$, represent group-specific deviations.  It will be convenient to write the model more compactly.
Define vectors
$
  y = [y_1^T, \ldots, y_M^T]^T,\quad
  \upsilon= [ \upsilon_1^T, \ldots, \upsilon_M^T]^T, \quad
  \epsilon = [\epsilon_1^T, \ldots,\epsilon_M^T]^T,
$
a stacked matrix $\mathbf X = [\mathbf X_1^T, \ldots, \mathbf
X_M^T]^T$, and $\mathbf Z = \mbox{diag}(\mathbf Z_1, \ldots , \mathbf
Z_M)$.  Then we can write (\ref{eq: mixed}) as
\begin{align}\label{eq: vector_mixed}
y = \mathbf X\beta^* + \mathbf Z \upsilon + \epsilon.
\end{align}
Marginalizing out the random effects yields
\begin{align}\label{eq: margin}
y \sim \mathcal N(\mathbf X \beta^*, \mathbf V(\sigma^{*2}, \mathbf \Psi^*)) \quad \mbox{with $\mathbf V(\sigma^{*2}, \mathbf \Psi^*))  = \sigma^{*2}\mathbf I_{N \times N} + \mathbf Z \mathbf \Psi^*) \mathbf Z^T$},
\end{align}
where $\mathbf \Psi^{*(B)} = \mathbf I_{M \times M} \otimes ~\mathbf \Psi^*$. This implies that $\mathbf V(\sigma^{*2}, \mathbf \Psi^*)$ is block-diagonal and observations belonging to different groups are independent. Thus, the inclusion of random effects only induces dependencies between observations belonging to the same group. We will be primarily working with the marginal form (\ref{eq: margin}) in subsequent sections. 

We study the presented model under the following assumptions:
\begin{enumerate}
\item \textit{High dimensions}: We allow $p$, the number of fixed regression coefficients, to be possibly much larger than $N$. On the other hand, $q$, the number of random effect variables, is assumed to be of constant order, or at least smaller than $n$. 
\item {\textit{Sparsity of $\beta^*$}}: We assume $\beta^*$ to be sparse in the sense that most of its elements are zero: a more precise specification on the level of sparsity required is detailed in Section~\ref{sec: consistent}. 
\item \textit{Structure of $\mathbf \Psi^*$}: Our paper primarily considers the scenario of $\mathbf \Psi^* = \tau^{*2}\mathbf I_{q \times q}$. However, our method, and corresponding theoretical results, can be extended to accommodate the more general scenario of $\mathbf \Psi^* = \mathbf D^*$ where $\mathbf D^*$ is a diagonal $q \times q$ matrix. 
\item \textit{Standardization of design matrices}:  The design matrices $\mathbf X$ and $\mathbf Z$ are assumed \textit{fixed} and standardized with $\|x_j\|_2^2 = N$ for $j \in \{1, \ldots , p \}$ and $\|z_j\|_2^2 = n$ for $j \in \{1, \ldots , qM\}$. 
\end{enumerate}

\section{A ridge-based inferential framework}\label{sec: method}
We would like to test null hypotheses of the form \eqref{eqn:null}, i.e., 
$
H_{0, G}: \beta^*_j = 0 ~\mbox{for all $j \in G$}, 
$
for subsets $G\subset\{1,\dots,p\}$, and construct confidence intervals for $\beta^*_j$. 
This section formally introduces our inferential framework. We first
describe its foundation, the de-biased ridge estimator,  and show how it can be used to accomplish these tasks. We then detail how to assemble the components needed to construct this de-biased ridge estimator and approximate its limiting distribution. Theoretical justification of our approach is provided along the way.

\subsection{A de-biased ridge estimator}

As in \cite{buhlmann2013}, our starting point is the ridge estimator given by
\begin{equation}\label{eq: ridge0}
\hat \beta = \mbox{arg}~\underset{\beta \in \mathbb R^p}{\min} ~\| y - \mathbf X\beta \|_2^2/N + \lambda \|\beta\|_2^2.
\end{equation}
This estimator is natural in models with homoscedastic and uncorrelated errors but in the linear mixed effect model, the random effects results in correlation. We thus refer to $\hat\beta$ from (\ref{eq: ridge0}) as the `naive' ridge estimator. The estimator has a simple closed form expression, 
\begin{equation}\label{eq: ridge}
\hat \beta = {N}^{-1}\left(\hat{\mathbf \Sigma} + \lambda \mathbf I_{p \times p}\right)^{-1} \mathbf X^T\mathbf Y, 
\end{equation}
where $\hat{\mathbf \Sigma} = \mathbf X^T\mathbf X/N$. It is straightforward to show that the ridge estimator is normally distributed with covariance matrix, multiplied by a factor of $N$,
\begin{equation}\label{eq: omega}
\mathbf \Omega^* = (\hat{\mathbf \Sigma} + \lambda \mathbf I_{p \times p})^{-1} \mathbf X^T \mathbf V(\sigma^{*2}, \tau^{*2}) \mathbf X (\hat{\mathbf \Sigma} + \lambda \mathbf I_{p \times p})^{-1}/N.
\end{equation}
As in \cite{buhlmann2013}, we assume that the diagonal entries of $\mathbf \Omega^* = ( \omega^*_{jk} )$ satisfy 
\begin{equation}\label{eq: diagonal}
\omega_{\min}^* \equiv \underset{j \in \{1, \ldots , p\}}{\min}~\omega^*_{jj} > 0.
\end{equation}
Likewise, we do not require (\ref{eq: diagonal}) to be bounded away from $0$ as a function of $N$ or $p$. 
This condition, in fact, is fairly mild; it is only violated under special kinds of design matrices. To illustrate, define $R \equiv \mbox{rank}(\mathbf X)$ and let
$\mathbf X = \mathbf Q \mathbf D \mathbf \Gamma^T$
be the singular value decomposition with left singular vectors 
$\mathbf Q \in \mathbb R^{N \times N}$ satisfying $\mathbf Q^T\mathbf Q = \mathbf I_{N \times N}$, $\mathbf D \in \mathbb R^{N \times N}$ a diagonal matrix with entries $s_1 \geq \ldots \geq s_N$ (i.e., singular values of $\mathbf X$), and right singular vectors $\mathbf \Gamma \in \mathbb R^{p \times N}$ satisfying $\mathbf \Gamma^T\mathbf \Gamma = \mathbf I_{N \times N}$. 
Let $\nu_{\min}(\mathbf A)$ and $\nu_{\max}(\mathbf A)$ be the smallest and largest eigenvalue of any square matrix $\mathbf A$, respectively. We can then show the following. 
\begin{lemma}\label{lem:diagonal}
Condition (\ref{eq: diagonal}) holds if and only if $\mathbf X \neq \mathbf 0$ and 
\begin{equation}\label{eq: r2}
\underset{j \in \{1, \ldots , p\}}{\min}~\underset{k \in \{1, \ldots , N\}, s_k \neq 0}{\max}~\mathbf \Gamma_{jk}^2 > 0.
\end{equation}
\end{lemma}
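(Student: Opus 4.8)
The plan is to reduce the positivity of each diagonal entry $\omega^*_{jj}$ to a statement about the columns of a single matrix, and then read that statement off from the singular value decomposition $\mathbf X = \mathbf Q \mathbf D \mathbf \Gamma^T$. First I would note that the marginal covariance $\mathbf V = \sigma^{*2}\mathbf I_{N\times N} + \mathbf Z \mathbf \Psi^{*(B)} \mathbf Z^T$ is positive definite, since $\sigma^{*2}>0$ and the second summand is positive semidefinite, so $\nu_{\min}(\mathbf V)>0$. Writing $\mathbf C = \mathbf X(\hat{\mathbf \Sigma} + \lambda \mathbf I_{p\times p})^{-1}$, expression (\ref{eq: omega}) reads $\mathbf \Omega^* = \mathbf C^T \mathbf V \mathbf C/N$, so its $j$th diagonal entry is $\omega^*_{jj} = c_j^T \mathbf V c_j/N$, where $c_j$ is the $j$th column of $\mathbf C$. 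Since $c_j^T \mathbf V c_j \geq \nu_{\min}(\mathbf V)\,\|c_j\|_2^2$, the quadratic form is strictly positive exactly when $c_j \neq 0$. Hence condition (\ref{eq: diagonal}) holds if and only if \emph{every} column of $\mathbf C$ is nonzero.

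It then remains to characterize the zero columns of $\mathbf C$ via the SVD. The cleanest route I would take is the push-through identity $\mathbf X(\mathbf X^T\mathbf X + N\lambda\mathbf I_{p\times p})^{-1} = (\mathbf X\mathbf X^T + N\lambda\mathbf I_{N\times N})^{-1}\mathbf X$, which avoids inverting the rank-deficient matrix $\hat{\mathbf \Sigma}$ directly. Since $\hat{\mathbf \Sigma} + \lambda\mathbf I_{p\times p} = (\mathbf X^T\mathbf X + N\lambda\mathbf I_{p\times p})/N$ and $\mathbf X\mathbf X^T = \mathbf Q\mathbf D^2\mathbf Q^T$, substituting the SVD and using $\mathbf Q^T\mathbf Q = \mathbf I_{N\times N}$ gives
\begin{equation*}
\mathbf C = \mathbf Q\,\mathbf \Lambda\,\mathbf \Gamma^T, \qquad \mathbf \Lambda = N\mathbf D(\mathbf D^2 + N\lambda\mathbf I_{N\times N})^{-1},
\end{equation*}
where $\mathbf \Lambda$ is diagonal with $k$th entry $\lambda_k = Ns_k/(s_k^2 + N\lambda)$. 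The key observation is that $\lambda_k > 0$ precisely when $s_k \neq 0$, and $\lambda_k = 0$ otherwise.

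Finally, because $\mathbf Q$ is orthogonal and hence injective, the $j$th column $c_j = \mathbf Q\mathbf \Lambda \gamma^{(j)}$ vanishes if and only if $\mathbf \Lambda \gamma^{(j)} = 0$, where $\gamma^{(j)}$ is the $j$th row of $\mathbf \Gamma$ read as a vector with entries $\mathbf \Gamma_{jk}$, $k=1,\dots,N$. Since $(\mathbf \Lambda \gamma^{(j)})_k = \lambda_k \mathbf \Gamma_{jk}$, the previous observation shows $\mathbf \Lambda \gamma^{(j)} = 0$ is equivalent to $\mathbf \Gamma_{jk}=0$ for every $k$ with $s_k \neq 0$, i.e.\ to $\max_{k:\, s_k\neq 0}\mathbf \Gamma_{jk}^2 = 0$. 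Therefore every column of $\mathbf C$ is nonzero if and only if $\min_j \max_{k:\, s_k\neq 0}\mathbf \Gamma_{jk}^2 > 0$, which is exactly (\ref{eq: r2}); the requirement $\mathbf X\neq\mathbf 0$ is what guarantees the index set $\{k:\, s_k\neq 0\}$ is nonempty so the inner maximum is well defined (if $\mathbf X=\mathbf 0$ then $\mathbf C=\mathbf 0$, every $\omega^*_{jj}=0$, and both sides of the equivalence fail). I expect the only delicate step to be the SVD manipulation of $\mathbf C$: one must handle the singularity of $\hat{\mathbf \Sigma}$ when $p>N$, which is precisely why I would route the computation through the push-through identity and the square orthogonal factor $\mathbf Q$ rather than attempting to diagonalize $\hat{\mathbf \Sigma} + \lambda\mathbf I_{p\times p}$ by completing $\mathbf \Gamma$ to a basis of $\mathbb R^p$.
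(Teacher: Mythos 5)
Your proof is correct and takes essentially the same approach as the paper: both arguments use the positive definiteness of $\mathbf V$ (from $\sigma^{*2}>0$) to reduce condition (\ref{eq: diagonal}) to positivity of the diagonal entries of $(\hat{\mathbf \Sigma}+\lambda\mathbf I_{p\times p})^{-1}\hat{\mathbf \Sigma}(\hat{\mathbf \Sigma}+\lambda\mathbf I_{p\times p})^{-1}$ --- which is exactly your $\mathbf C^T\mathbf C/N$ --- and then read this off from the SVD of $\mathbf X$, obtaining that the $j$th entry is positive iff $\mathbf\Gamma_{jk}\neq 0$ for some $k$ with $s_k\neq 0$. The only differences are cosmetic: you compute the columns $c_j = \mathbf Q\mathbf\Lambda\gamma^{(j)}$ via the push-through identity while the paper diagonalizes the sandwiched matrix directly as $\mathbf \Gamma\,\mathrm{diag}\bigl(s_k^2/(s_k^2+\lambda)^2\bigr)\mathbf \Gamma^T$, and your column-based equivalence makes the ``only if'' direction explicit where the paper (which writes only the lower bound $\mathbf\Omega^*\geq c\,\tilde{\mathbf\Omega}^*$) leaves it terse.
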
 

In the high-dimensional case with $R \leq N < p$, the parameter $\beta^*$ is not identifiable:  many vectors $\theta \in \mathbb R^p$ satisfy $\mathbf X\beta^* = \mathbf X \theta$. 
A natural parameter to consider, as noted in \cite{shao2012}, is $\theta^* = \mathbf P_{\mathbf X^T} \beta^* = \mathbf X^T(\mathbf X\mathbf X^T)^{-}\mathbf X \beta^* = \mathbf \Gamma\mathbf \Gamma^T\beta^*$, the projection of $\beta^*$ onto the linear space generated by the rows of $\mathbf X$. As it turns out, under condition (\ref{eq: diagonal}), or equivalently (\ref{eq: r2}), the ridge estimator $\hat \beta$ is a reasonable proxy for $\theta^*$ when $\lambda$ is sufficiently small. 

\begin{proposition}\label{prop: ridge_theta}
Suppose that $\lambda > 0$ and (\ref{eq: diagonal}), or equivalently, (\ref{eq: r2}), holds. 
Then, under our linear mixed effect model from Section~\ref{sec: model}, the ridge estimator (\ref{eq: ridge}) satisfies 
\begin{align*}
\underset{j \in \{1, \ldots , p\}}{\max} ~ \left| \mathbb E\left[\hat \beta_{j}\right] - \theta^*_{j} \right| &\le \lambda \|\theta^*\|_2 \nu_{\min, +}\left(\hat{\mathbf \Sigma}\right)^{-1} , \\
\underset{j \in \{1, \ldots , p\}}{\min}  ~ \Var\left[\hat \beta_j\right] &\ge N \omega^*_{\min}
\end{align*}
where $\nu_{\min,  +}(\hat{\mathbf \Sigma})$ refers to the smallest non-zero eigenvalue of $\hat{\mathbf \Sigma}$. 
\end{proposition}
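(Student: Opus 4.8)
The plan is to prove the two bounds separately: the bias (first inequality) via the singular value decomposition of $\mathbf X$, and the variance (second inequality) directly from the covariance formula (\ref{eq: omega}). For the bias, I would begin from the closed form (\ref{eq: ridge}) and take expectations using $\mathbb E[y] = \mathbf X\beta^*$, which gives $\mathbb E[\hat\beta] = (\hat{\mathbf\Sigma} + \lambda\mathbf I)^{-1}\hat{\mathbf\Sigma}\,\beta^*$ since $\mathbf X^T\mathbf X/N = \hat{\mathbf\Sigma}$. I would then substitute the thin SVD, keeping only the $R = \mathrm{rank}(\mathbf X)$ columns $\mathbf\Gamma_+$ of $\mathbf\Gamma$ with $s_k\neq 0$, so that $\hat{\mathbf\Sigma} = \mathbf\Gamma_+\mathbf\Lambda\mathbf\Gamma_+^T$ with $\mathbf\Lambda$ the positive diagonal matrix of nonzero eigenvalues $s_k^2/N$. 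Completing $\mathbf\Gamma_+$ to an orthonormal basis of $\mathbb R^p$ simultaneously diagonalizes $\hat{\mathbf\Sigma}$ and $\hat{\mathbf\Sigma}+\lambda\mathbf I$, yielding $(\hat{\mathbf\Sigma}+\lambda\mathbf I)^{-1}\hat{\mathbf\Sigma} = \mathbf\Gamma_+(\mathbf\Lambda+\lambda\mathbf I)^{-1}\mathbf\Lambda\,\mathbf\Gamma_+^T$. Recognizing that $\theta^* = \mathbf P_{\mathbf X^T}\beta^* = \mathbf\Gamma_+\mathbf\Gamma_+^T\beta^*$ and using $(\mathbf\Lambda+\lambda\mathbf I)^{-1}\mathbf\Lambda - \mathbf I = -\lambda(\mathbf\Lambda+\lambda\mathbf I)^{-1}$, I obtain the clean bias identity
\[
\mathbb E[\hat\beta] - \theta^* = -\lambda\,\mathbf\Gamma_+(\mathbf\Lambda+\lambda\mathbf I)^{-1}\mathbf\Gamma_+^T\beta^*.
\]

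Next, to control the maximum coordinate, I would write the $j$th entry as $-\lambda\,(\mathbf\Gamma_+^Te_j)^T(\mathbf\Lambda+\lambda\mathbf I)^{-1}(\mathbf\Gamma_+^T\beta^*)$ and apply Cauchy--Schwarz. Three ingredients then finish the argument: (i) $\vertiii{(\mathbf\Lambda+\lambda\mathbf I)^{-1}}_2 \le \nu_{\min,+}(\hat{\mathbf\Sigma})^{-1}$, because every diagonal entry $s_k^2/N + \lambda$ exceeds the smallest nonzero eigenvalue $\nu_{\min,+}(\hat{\mathbf\Sigma})$ once $\lambda>0$; (ii) $\|\mathbf\Gamma_+^T\beta^*\|_2 = \|\theta^*\|_2$, since $\mathbf\Gamma_+$ has orthonormal columns so its transpose is isometric on the row space; and (iii) $\|\mathbf\Gamma_+^Te_j\|_2 \le 1$, because this quantity equals the $j$th diagonal entry of the projection matrix $\mathbf\Gamma_+\mathbf\Gamma_+^T$, which lies in $[0,1]$. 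Multiplying these bounds gives $|\mathbb E[\hat\beta_j]-\theta^*_j| \le \lambda\|\theta^*\|_2\,\nu_{\min,+}(\hat{\mathbf\Sigma})^{-1}$ uniformly in $j$, which is the first claim after taking the maximum.

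For the second bound, I would note that $\hat\beta$ is an affine image of the Gaussian vector $y$ and hence Gaussian, with covariance exactly the matrix in (\ref{eq: omega}) up to the stated $N$-factor; consequently $\Var[\hat\beta_j]$ is a fixed scalar multiple of $\omega^*_{jj}$, and condition (\ref{eq: diagonal}) guarantees $\omega^*_{jj}\ge\omega^*_{\min}>0$ for every $j$, giving the lower bound after taking the minimum.

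I expect the only real work to lie in the bias calculation, specifically in handling the rank-deficient high-dimensional regime correctly by projecting onto the row space of $\mathbf X$ (via $\mathbf\Gamma_+$) rather than all of $\mathbb R^p$, and in spotting that the factor $\|\mathbf\Gamma_+^Te_j\|_2\le 1$---a diagonal entry of a projection---is exactly what removes a spurious extra factor and produces the clean constant $\lambda\|\theta^*\|_2\,\nu_{\min,+}(\hat{\mathbf\Sigma})^{-1}$. The variance claim, by contrast, is essentially a restatement of the definition of $\mathbf\Omega^*$ in (\ref{eq: omega}) together with the positivity condition (\ref{eq: diagonal}).
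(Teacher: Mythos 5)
Your proposal is correct and takes essentially the same route as the paper, which likewise works from the SVD of $\mathbf X$, reduces the bias to $-\mathbf \Gamma\left(\lambda^{-1}N^{-1}\mathbf D^2 + \mathbf I_{R\times R}\right)^{-1}\mathbf \Gamma^T\theta^*$ (your $-\lambda\,\mathbf\Gamma_+(\mathbf\Lambda+\lambda\mathbf I)^{-1}\mathbf\Gamma_+^T\beta^*$ is the identical quantity), bounds the middle matrix in spectral norm via the smallest nonzero eigenvalue $\nu_{\min,+}(\hat{\mathbf \Sigma})$, and reads the variance claim directly off (\ref{eq: omega}). Your coordinate-wise Cauchy--Schwarz step with $\|\mathbf\Gamma_+^T e_j\|_2 \le 1$ is just a slightly more explicit rendering of the paper's appeal to orthonormality of $\mathbf \Gamma$ together with $\ell_\infty \le \ell_2$, so the two arguments differ only cosmetically.
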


Proposition \ref{prop: ridge_theta}, which is proven in the Appendix,
implies that the bias in estimating $\theta^*$ with $\hat \beta$ is
small when $\lambda > 0$ is sufficiently small. We explicitly quantify
how small $\lambda$ needs to be for the estimation bias to be smaller than the standard error of $\hat \beta$. 

\begin{corollary}
Suppose that the ridge penalty parameter $\lambda > 0$ is chosen such that
$
\frac{\lambda}{\sqrt{\omega^*_{\min}}} \le \frac{\nu_{\min, +}(\hat{\mathbf \Sigma})}{\sqrt{N}\|\theta^*\|_2}, 
$
and that condition (\ref{eq: diagonal}), or equivalently, (\ref{eq: r2}) holds. Then, 
\begin{equation*}
\underset{j \in \{1, \ldots, p\}}{\max} ~ \left| \mathbb E\left[\hat \beta_{j}\right] - \theta^*_{j} \right|  \le \underset{j \in \{1, \ldots , p\}}{\min}  ~ \sqrt{\Var\left[\hat \beta_j\right]} .
\end{equation*}
\end{corollary}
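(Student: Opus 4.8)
The plan is to obtain the corollary as an immediate consequence of the two estimates in Proposition \ref{prop: ridge_theta}, combined with the hypothesized upper bound on $\lambda$. Write $b \equiv \max_{j}|\mathbb E[\hat\beta_j]-\theta^*_j|$ for the worst-case coordinate bias and $s \equiv \min_{j}\sqrt{\Var[\hat\beta_j]}$ for the smallest coordinate-wise standard error of the ridge estimator; the claim is exactly $b \le s$. Proposition \ref{prop: ridge_theta} already controls each quantity separately --- an upper bound on $b$ in terms of $\lambda$, $\|\theta^*\|_2$ and $\nu_{\min,+}(\hat{\mathbf\Sigma})$, and a lower bound on $\min_j\Var[\hat\beta_j]$ in terms of $\omega^*_{\min}$ --- so the task reduces to checking that the $\lambda$-hypothesis is precisely what forces the bias estimate below the standard-error estimate.

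Concretely, I would first rearrange the hypothesis $\lambda/\sqrt{\omega^*_{\min}} \le \nu_{\min,+}(\hat{\mathbf\Sigma})/(\sqrt N\,\|\theta^*\|_2)$ by multiplying both sides by the positive quantity $\sqrt{\omega^*_{\min}}\,\|\theta^*\|_2/\nu_{\min,+}(\hat{\mathbf\Sigma})$, which yields $\lambda\|\theta^*\|_2/\nu_{\min,+}(\hat{\mathbf\Sigma}) \le \sqrt{\omega^*_{\min}/N}$. Feeding this into the bias estimate of Proposition \ref{prop: ridge_theta} gives $b \le \sqrt{\omega^*_{\min}/N}$. Second, from the definition (\ref{eq: omega}) of $\mathbf\Omega^*$ as $N$ times the covariance of $\hat\beta$, one has $\Var[\hat\beta_j]=\omega^*_{jj}/N$, hence $s = \sqrt{\omega^*_{\min}/N}$; equivalently, this is the square root of the variance lower bound supplied by Proposition \ref{prop: ridge_theta}. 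Chaining the two displays then gives $b \le \sqrt{\omega^*_{\min}/N} = s$, which is the assertion.

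There is no genuine obstacle here: the result is pure algebra, and the condition on $\lambda$ has clearly been calibrated so that the two bounds meet. The only matters needing care are bookkeeping ones. I would verify that every quantity under a square root or in a denominator is strictly positive --- condition (\ref{eq: r2})/(\ref{eq: diagonal}) delivers $\omega^*_{\min}>0$, while $\mathbf X\neq\mathbf 0$ gives $\nu_{\min,+}(\hat{\mathbf\Sigma})>0$, and the normalization implicitly requires $\theta^*\neq 0$ so that $\|\theta^*\|_2>0$. I would also double-check that the power of $N$ carried by the variance lower bound of Proposition \ref{prop: ridge_theta} matches $\omega^*_{\min}/N$, so that the resulting inequality is tight rather than merely sufficient; any discrepancy there would only make the inequality easier to satisfy and would not affect the conclusion.
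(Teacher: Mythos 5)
Your proposal is correct and matches the paper's intended argument: the paper offers no separate proof of this corollary, treating it exactly as you do --- rearrange the hypothesis on $\lambda$ into $\lambda\|\theta^*\|_2/\nu_{\min,+}(\hat{\mathbf\Sigma}) \le \sqrt{\omega^*_{\min}/N}$, feed it into the bias bound of Proposition \ref{prop: ridge_theta}, and compare with the variance lower bound. Your side remark about the power of $N$ is also well taken: the correct scaling is $\Var[\hat\beta_j]=\omega^*_{jj}/N$ (consistent with $\mathbf\Omega^*$ being $N$ times the covariance and with $\kappa_j=\sqrt{N/\omega^*_{jj}}$), so the factor $N\omega^*_{\min}$ appearing in the proposition's displayed variance bound is a typo, and as you note this does not affect the conclusion.
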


Our interest, however, lies in $\beta^*$, not $\theta^*$. Thus, for $\hat \beta$ to be useful, we need to adjust $\hat \beta$ for the projection bias $B_j = \theta_j^* - \beta_j^*$. By definition of $\theta^*$, one observes that
\begin{equation}
B_j = (\mathbf P_{\mathbf X^T} \beta^*)_j - \beta^*_j = (\mathbf P_{\mathbf X^T})_{jj} \beta^*_j - \beta^*_j + \sum_{k \neq j} (\mathbf P_{\mathbf X^T})_{jk}\beta^*_k,
\end{equation}
which, under the null hypothesis $H_{0, j}:\beta_j^* = 0$, becomes,
\begin{equation}\label{eq: proj_bias}
B_{H_0, j} =  \sum_{k \neq j} (\mathbf P_{\mathbf X^T})_{jk}\beta^*_k.
\end{equation}
The quantity can be approximated by
\begin{equation}
\hat B_{H_0, j} =  \sum_{k \neq j} (\mathbf P_{\mathbf X^T})_{jk}\hat \beta^{\mbox{\init}}_k.
\end{equation}
where $\hat \beta^{\mbox{\init}}$ is a consistent initial estimator of
$\beta^*$ (and consistency occurs under additional assumptions).
Consider then the corrected ridge estimator $\hat \beta^{\corr}_{j}$ as a statistic for testing $H_{0, j}$:
\begin{equation}
\hat \beta^{\corr}_{j} = \hat \beta_j - \hat B_{H_0, j} = \hat \beta_j -  \sum_{k \neq j} (\mathbf P_{\mathbf X^T})_{jk}\hat \beta^{\mbox{\init}}_k.
\end{equation}
Assuming that $\underset{j \in \{1, \ldots , p\}}{\min} ~ \mathbf \omega_{\min}^* > 0$, we can write
\[
\hat \beta^{\corr}_{j} = W_j + \gamma_j,
\]
where
\begin{align*}
\gamma_j &= (\mathbf P_{\mathbf X^T})_{jj} \beta_j^* - \sum_{k \neq j} (\mathbf P_{\mathbf X^T})_{jk}\left(\hat \beta^{\init}_k - \beta^*_k\right) + \delta_j, \\
\delta_j &=\delta_j(\lambda)= \mathbb E\left[\hat \beta_j\right] - \theta^*_j.
\end{align*}
A rearrangement of the above set of equations yields
\begin{equation}\label{eq: dist}
\frac{\hat \beta^{\corr}_{j}}{\left(\mathbf P_{\mathbf X^T}\right)_{jj}} - \beta_j^* = \frac{W_j}{(\mathbf P_{\mathbf X^T})_{jj}} - \sum_{k \neq j} \frac{(\mathbf P_{\mathbf X^T})_{jk}}{(\mathbf P_{\mathbf X^T})_{jj}}(\hat \beta^{\init}_k - \beta^*_k) + \frac{\delta_j}{(\mathbf P_{\mathbf X^T})_{jj}}. 
\end{equation}
Then, from model (\ref{eq: margin}), it follows that
\begin{equation}\label{eq: distribution}
W_1, \ldots , W_p \sim \mathcal N(0, \mathbf \Omega^*/N).
\end{equation}

The normalizing factors needed to bring the $W_j$ to $N(0, 1)$ scale are given by $\kappa_{j} = \kappa_j(N, p) = \sqrt{N/\omega_{jj}^*}$. The proof is straightforward.

\begin{theorem}\label{thm: pvalue}
Suppose we choose the ridge penalty parameter $\lambda > 0$ such that
\begin{equation}\label{eq: lambda_ridge_choice}
\lambda \left(\omega^*_{\min}\right)^{-1/2} = o\left(\nu_{\min, +}\big(\hat{\mathbf \Sigma}\big)/\left(N^{1/2}\|\theta^*\|_2\right)\right),
 \quad (N, p \to \infty),
\end{equation}
and assume that for our choice of $\hat \beta^{\init}$, there exist constants $C_j = C_j(N, p)$ such that 
\begin{equation}\label{eq: c_j}
\mathbb P\left[\bigcap_{j=1}^p \left\{\left| \kappa_j(N, p)\sum_{k \neq j} (\mathbf P_{\mathbf X^T})_{jk}\left(\hat \beta^{\init}_k - \beta^*_k\right) \right| \le C_j(N, p) \right\} \right] \to 1 \quad (N, p \to \infty).
\end{equation}
Then, under the null hypothesis, $H_{0, j}$, for all $w > 0$,
\begin{equation}
\underset{N, p \to \infty}{\lim \sup} ~\mathbb P\left[ \left|\kappa_j\hat\beta^{\corr}_j \right| > w \right] - \mathbb P\left[|\widetilde W| + C_j > w\right] \le 0,
\end{equation}
where $\widetilde W \sim N(0, 1)$. In addition, for any sequence of subsets $G_p \subseteq \{1, \ldots , p\}$, if $H_{0, G_p}$ is true, then for any $w > 0$,
\begin{equation}
\underset{N, p \to \infty}{\lim \sup} ~\mathbb P\left[\max_{j \in G_p}~ \left|\kappa_j\hat \beta^{\corr}_j \right| > w \right] - \mathbb P\left[\max_{j \in G_p}~ \left(|\widetilde W| + C_j\right) > w\right] \le 0. 
\end{equation}
\end{theorem}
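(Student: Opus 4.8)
The plan is to decompose the normalized statistic $\kappa_j \hat\beta^{\corr}_j$, under $H_{0,j}$, into an exactly standard-normal term, a term controlled by $C_j$ on a high-probability event, and a deterministic bias that I will show is asymptotically negligible. Under $H_{0,j}$ we have $\beta^*_j = 0$, so the summand $(\mathbf P_{\mathbf X^T})_{jj}\beta^*_j$ in $\gamma_j$ drops and
\begin{equation*}
\kappa_j \hat\beta^{\corr}_j = \kappa_j W_j - \kappa_j \sum_{k \neq j}(\mathbf P_{\mathbf X^T})_{jk}\big(\hat\beta^{\init}_k - \beta^*_k\big) + \kappa_j \delta_j .
\end{equation*}
By \eqref{eq: distribution}, $W_j \sim \mathcal N(0, \omega^*_{jj}/N)$, so with $\kappa_j = \sqrt{N/\omega^*_{jj}}$ the first term $\widetilde W := \kappa_j W_j$ is exactly $\mathcal N(0,1)$; the middle term is precisely the quantity that assumption \eqref{eq: c_j} bounds by $C_j$ on an event of probability tending to one.

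Next I would bound the bias uniformly. Proposition \ref{prop: ridge_theta} gives $\max_j |\delta_j| = \max_j|\mathbb E[\hat\beta_j] - \theta^*_j| \le \lambda \|\theta^*\|_2 \, \nu_{\min,+}(\hat{\mathbf \Sigma})^{-1}$, and $\omega^*_{jj} \ge \omega^*_{\min}$ yields $\kappa_j \le \sqrt{N/\omega^*_{\min}}$. Hence
\begin{equation*}
\bar\delta := \max_{j} |\kappa_j \delta_j| \le \frac{\lambda}{\sqrt{\omega^*_{\min}}} \cdot \frac{\sqrt{N}\,\|\theta^*\|_2}{\nu_{\min,+}(\hat{\mathbf \Sigma})} \longrightarrow 0
\end{equation*}
by the penalty choice \eqref{eq: lambda_ridge_choice}. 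This $\bar\delta$ is deterministic and $o(1)$, so the bias contributes only a vanishing threshold shift.

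For the single-coordinate claim, let $\mathcal E$ denote the simultaneous event inside the probability in \eqref{eq: c_j}, so $\mathbb P[\mathcal E] \to 1$. On $\mathcal E$ the triangle inequality gives $|\kappa_j \hat\beta^{\corr}_j| \le |\widetilde W| + C_j + \bar\delta$, whence
\begin{equation*}
\mathbb P\big[|\kappa_j \hat\beta^{\corr}_j| > w\big] \le \mathbb P\big[|\widetilde W| + C_j > w - \bar\delta\big] + \mathbb P[\mathcal E^c].
\end{equation*}
Because $|\widetilde W|$ has a density bounded by $\sqrt{2/\pi}$, the shift by $\bar\delta$ inflates the Gaussian probability by at most $\sqrt{2/\pi}\,\bar\delta$; taking $\limsup_{N,p\to\infty}$ kills both this term and $\mathbb P[\mathcal E^c]$, giving the first inequality.

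The max over $G_p$ follows the same route: the decomposition and the same simultaneous event $\mathcal E$ yield, on $\mathcal E$, $\max_{j\in G_p}|\kappa_j\hat\beta^{\corr}_j| \le \max_{j\in G_p}(|\widetilde W_j| + C_j) + \bar\delta$, so
\begin{equation*}
\mathbb P\Big[\max_{j\in G_p}|\kappa_j\hat\beta^{\corr}_j| > w\Big] \le \mathbb P\Big[\max_{j\in G_p}\big(|\widetilde W_j| + C_j\big) > w - \bar\delta\Big] + \mathbb P[\mathcal E^c].
\end{equation*}
I expect the main obstacle to be the final anti-concentration step: I must show that shifting the threshold of the maximum by $\bar\delta$ changes the exceedance probability negligibly in the limsup, i.e. $\limsup \mathbb P[w - \bar\delta < \max_{j\in G_p}(|\widetilde W_j| + C_j) \le w] = 0$. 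Unlike the scalar case, a naive union bound over $|G_p|$ coordinates of the bounded-density estimate need not vanish, so I would instead invoke an anti-concentration inequality for the maximum of the (correlated, $\mathbf \Omega^*$-induced) Gaussian vector $(\widetilde W_j)_{j\in G_p}$ — controlling the small-ball probability by a factor of order $\bar\delta\sqrt{\log|G_p|}$ — and verify that the rate at which $\bar\delta \to 0$ in \eqref{eq: lambda_ridge_choice} dominates this $\sqrt{\log p}$ penalty. Everything else is bookkeeping.
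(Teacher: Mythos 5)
Your proposal follows the same route as the paper's own proof: that proof consists, in its entirety, of the uniform bias bound $\max_j \kappa_j|\delta_j| \le \lambda\|\theta^*\|_2\,\nu_{\min,+}\big(\hat{\mathbf\Sigma}\big)^{-1} N^{1/2}\left(\omega^*_{\min}\right)^{-1/2} = o(1)$ (your $\bar\delta$), obtained exactly as you obtain it from Proposition~\ref{prop: ridge_theta} and \eqref{eq: lambda_ridge_choice}, followed by the one-line assertion that the claim ``follows from (\ref{eq: dist}) and (\ref{eq: c_j})''. Your decomposition under $H_{0,j}$, your use of the high-probability event $\mathcal E$ from \eqref{eq: c_j}, and your scalar-case absorption of the deterministic shift $\bar\delta$ via the $\sqrt{2/\pi}$ density bound on $|\widetilde W|$ are precisely the bookkeeping the paper leaves implicit; for the single-coordinate claim your argument is complete and, if anything, more careful than the paper's. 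Your reading of the group statement, with $(\widetilde W_j)_{j\in G_p}$ the correlated vector $(\kappa_j W_j)_{j\in G_p}$ rather than a single shared standard normal, is also the intended one --- it is what the group $p$-value \eqref{eq: pg} and Corollary~\ref{cor: pvalue} require.

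The anti-concentration issue you flag for the maximum is genuine, but be aware that it is a gap in the paper's proof as much as in yours: the paper never explains how the threshold shift by $\bar\delta$ is absorbed when $|G_p|$ grows. As you observe, a union bound costs $|G_p|\bar\delta$ and a Nazarov-type bound for the maximum of a unit-variance correlated Gaussian vector costs roughly $\bar\delta\sqrt{\log|G_p|}$; neither can be shown to vanish under \eqref{eq: lambda_ridge_choice} alone, which supplies $\bar\delta = o(1)$ with no rate. So the verification step you defer (``check that the rate at which $\bar\delta\to 0$ dominates $\sqrt{\log p}$'') cannot actually be closed from the stated hypotheses. The claim is safe as written when $|G_p|$ stays bounded, and the clean general repair is to strengthen \eqref{eq: lambda_ridge_choice} by a factor of $\sqrt{\log p}$ --- harmless in this setting, since $\lambda$ is a user-chosen tuning parameter --- after which your Nazarov argument finishes the proof. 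In short: same approach as the paper, with the scalar case fully correct, and the residual group-case issue being a deficiency you correctly diagnosed rather than one you created.
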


In subsequent sections, we identify specific scalings of $N$ and $p$
such that Theorem \ref{thm: pvalue} becomes applicable.  Based on the asymptotic distributions in Theorem \ref{thm: pvalue}, we can construct $p$-values for testing $H_{0,G}$, $G \subseteq \{1, \ldots , p\}$. For testing the individual null hypothesis $H_{0,j}$, we define the $p$-value for the two-sided alternative as
\begin{equation}\label{eq: pj}
\varrho_j = 2(1 - \Phi((\kappa_j|\hat \beta^{\corr}_j| - C_j)_+)),
\end{equation}
where $\Phi$ is the standard normal distribution function. For
testing the group null hypothesis $H_{0, G}$, $|G| > 1$, we define the
$p$-value as 
\begin{equation}\label{eq: pg}
\varrho_G = 1 - \mathbb P\left[\underset{j \in G}{\max}~\left(\kappa_j|W_j| + C_j\right) \le \underset{j \in G}{\max}~\kappa_j|\hat \beta^{\corr}_j| \right],
\end{equation}
where $W_1, \ldots , W_p$ are as in  (\ref{eq: distribution}). From Theorem \ref{thm: pvalue}, we can derive the following corollary. 

\begin{corollary}\label{cor: pvalue}
Under the conditions in Theorem \ref{thm: pvalue}, for any $\alpha \in (0, 1)$, the following statements hold:
\begin{align*}
\underset{N, p \to \infty}{\lim \sup} ~\mathbb P\left[\varrho_j \le \alpha\right] - \alpha &\le 0 \quad \mbox{if $H_{0,j}$ is true}, \\ 
\underset{N, p \to \infty}{\lim \sup} ~\mathbb P\left[\varrho_G \le \alpha\right] - \alpha &\le 0 \quad \mbox{if $H_{0,G}$ is true}. 
\end{align*}
\end{corollary}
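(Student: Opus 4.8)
The plan is to reduce each $p$-value inequality to a threshold-crossing event for the corrected statistic and then invoke the matching part of Theorem~\ref{thm: pvalue}, which has been engineered so that its right-hand bound collapses to exactly $\alpha$.

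For the individual hypothesis I would first rewrite the rejection event. Setting $z_{\alpha/2} = \Phi^{-1}(1-\alpha/2) > 0$ (positive since $\alpha \in (0,1)$), definition \eqref{eq: pj} and monotonicity of $\Phi$ give $\varrho_j \le \alpha \iff (\kappa_j|\hat\beta^{\corr}_j| - C_j)_+ \ge z_{\alpha/2}$. Because $z_{\alpha/2} > 0$, the truncation $(\cdot)_+$ is inactive, so this is equivalent to $|\kappa_j\hat\beta^{\corr}_j| \ge z_{\alpha/2} + C_j$. Applying the first conclusion of Theorem~\ref{thm: pvalue} with threshold $w = z_{\alpha/2} + C_j$ then yields
$$\limsup_{N,p\to\infty} \mathbb P\!\left[|\kappa_j\hat\beta^{\corr}_j| > z_{\alpha/2} + C_j\right] \le \mathbb P\!\left[|\widetilde W| + C_j > z_{\alpha/2} + C_j\right] = \mathbb P\!\left[|\widetilde W| > z_{\alpha/2}\right] = \alpha.$$
The crucial point is that the additive constant $C_j$ occurs on both sides of the theorem's bound and cancels, so the limiting bound is exactly $\alpha$ irrespective of the value or $(N,p)$-dependence of $C_j$. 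Combined with the event rewriting this gives $\limsup \mathbb P[\varrho_j \le \alpha] - \alpha \le 0$.

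For the group hypothesis I would argue analogously, letting the reference distribution play the role of the Gaussian. Treating the inner probability in \eqref{eq: pg} as taken over an independent draw $W \sim \mathcal N(0, \mathbf\Omega^*/N)$, set $S = \max_{j\in G}(\kappa_j|W_j| + C_j)$ and $T = \max_{j\in G}\kappa_j|\hat\beta^{\corr}_j|$, so that $\varrho_G = 1 - \mathbb P[S \le T]$ is a non-increasing function of the observed $T$. Defining the upper $\alpha$-quantile $c_\alpha = \inf\{t : \mathbb P[S > t] \le \alpha\}$, monotonicity gives $\varrho_G \le \alpha \iff T \ge c_\alpha$. Applying the second conclusion of Theorem~\ref{thm: pvalue} with $w = c_\alpha$ bounds $\limsup \mathbb P[T > c_\alpha]$ by $\mathbb P[S > c_\alpha]$, which equals $\alpha$ precisely because $c_\alpha$ is by construction the $(1-\alpha)$-quantile of the very variable $S$ that defines $\varrho_G$. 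Hence $\limsup \mathbb P[\varrho_G \le \alpha] - \alpha \le 0$.

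The monotone rewriting of the two $p$-values is routine; the one place that needs care is that both thresholds, $z_{\alpha/2} + C_j$ and $c_\alpha$, depend on $(N,p)$, so I am formally applying Theorem~\ref{thm: pvalue} along a sequence of thresholds rather than at a single fixed $w$. This is harmless here because the theorem's upper bound is expressed through the same reference distribution used to define the $p$-value, so the $(N,p)$-dependence cancels and the bound reduces to the constant $\alpha$; I would nonetheless state this explicitly. The only remaining nicety is passing between the $\{\ge\}$ and $\{>\}$ events: since $\hat\beta^{\corr}_j$ (hence $T$) and the Gaussian reference are continuous, the boundary carries no mass and the two events agree in the limit, which I would record with a brief $\eta \downarrow 0$ sandwiching argument if a fully rigorous treatment of the moving threshold is desired.
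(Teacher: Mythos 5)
Your proof is correct and takes essentially the same route the paper intends: the paper gives no separate proof of Corollary~\ref{cor: pvalue}, presenting it as an immediate consequence of Theorem~\ref{thm: pvalue}, and your inversion of the $p$-value definitions (\ref{eq: pj}) and (\ref{eq: pg}) into threshold-crossing events, with the constants $C_j$ cancelling against the reference distribution so the bound collapses to $\alpha$, is exactly that derivation. Your explicit treatment of the $(N,p)$-dependent thresholds and of the $\{\ge\}$ versus $\{>\}$ boundary via the $\eta \downarrow 0$ sandwich merely supplies details the paper leaves implicit.
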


\subsection{Consistent estimation of variance parameters}\label{sec: consistent}

As presented, the de-biased ridge framework depends on the values of the unknown parameters $\sigma^{*2}$ an $\tau^{*2}$. We employ a two-step approach to consistent estimation of these parameters.

\begin{enumerate}
\item Let $S = \{j ~:~ \beta^*_j \neq 0 \}$ be the support of
  $\beta^*$,  with cardinality $d = |S|$.  We use the Lasso estimator
  $\hat \beta^L = \mbox{arg}~\underset{\beta \in \mathbb R^p}{\min}
  ~\| y - \mathbf X\beta \|_2^2/N + 2\lambda_L \|\beta\|_1$
  with an appropriate choice of tuning parameter $\lambda_L$ to identify an initial guess of the elements (i.e., indices) in $S$. We define $\hat S = \{j ~:~ \hat \beta^L_j \neq 0\}$ as our guess for the support $S$. By properties of the Lasso, $|\hat S| \le N$, although, in general, $\hat S$ may not be a good estimate of $S$.
\item Working with the (potentially misspecified) random effects model
\begin{equation}\label{eq: wrong}
y = \mathbf X_{\hat S}\beta_{\hat S}^* + \mathbf Zb + \epsilon,
\end{equation}
we apply Henderson's Method III \citep{henderson1953} to form
estimates $\hat \sigma^2$ and $\hat \tau^2$.
Henderson's Method III is particularly
tractable theoretically and enables us to study
consistency in the scenario where (\ref{eq: wrong}) is actually
misspecified, i.e., $|S \backslash \hat S| > 0$.  For a discussion of
Henderson's methods and the appeals of Method III, see \citep{searle1968}.
\end{enumerate}

In recent years Henderson's methods have largely been supplanted by
alternatives such as restricted maximum likelihood (REML) for variance
component estimation \citep{harville1977};  it is customary to refer
to variances of random effects as variance components.
We thus provide a brief overview of what Henderson's Method III entails.
Consider, first, the low-dimensional model (\ref{eq: margin}) with $p < N$. To simplify the notation in the following explanation, we momentarily define $\mathbf{\tilde X} = \begin{bmatrix}\mathbf X & \mathbf Z\end{bmatrix}$. By not distinguishing between fixed and random effects, the idea behind Henderson's methods is to match the differences in the reductions in the sum-of-squares between sub-models of (\ref{eq: margin}) to its expected value, not unlike a method-of-moments approach. To elaborate, in fitting (\ref{eq: margin}) to data $y$, the reduction in the sum of squares is
\begin{equation}
\mathcal R(\beta, \upsilon) = y^T\mathbf P_{\mathbf{\tilde X}} y.
\end{equation}
Likewise, the decrease in the sum of squares due to fitting the reduced model $y = \mathbf X\beta + \epsilon$ is
\begin{equation}
\mathcal R(\beta) = y^T\mathbf P_{\mathbf X} y.
\end{equation}
The expected difference in the reductions $\mathcal R(\upsilon|\beta) \equiv \mathcal R(\beta, \upsilon) - \mathcal R(\beta)$ is
\begin{align}\label{eq: hend1}
\mathbb E[\mathcal R(\upsilon|\beta)] = \tau^{*2} \mbox{tr}\left(\mathbf Z^T\left[\mathbf I_{N \times N} -  \mathbf P_{\mathbf X}\right] \mathbf Z\right) + \sigma^{*2}\left[\mbox{rank}\left(\mathbf{\tilde X}\right) - \mbox{rank}\left(\mathbf X\right)\right].
\end{align}
Moreover,
\begin{equation}\label{eq: hend2}
\mathbb E\left[y^Ty - R(\beta, \upsilon)\right] =  \sigma^{*2}\left[N - \mbox{rank}\left(\mathbf{\tilde X}\right)\right].
\end{equation}
Together, (\ref{eq: hend1}) and (\ref{eq: hend2}), when matching
theoretical expectations to empirical averages, form a triangular
system of linear equations, from which we derive $\hat \sigma^2$ and
$\hat \tau^2$.  
We find
\begin{align}
\hat\sigma^2 &= \frac{y^T\left(\mathbf I_{N\times N} - \mathbf P_{\mathbf{\tilde X}}\right)y}{N - \mbox{rank}\left(\mathbf {\tilde X}\right)}, \label{eq: sigma_hat} \\
\hat \tau^2 &= \frac{y^T\left(\mathbf P_{\mathbf{\tilde X}}- \mathbf P_{\mathbf X}\right)y - \hat\sigma^2\left[\mbox{rank}\left(\mathbf{\tilde X}\right) - \mbox{rank}(\mathbf X)\right]}{ \mbox{tr}(\mathbf Z^T(\mathbf I_{N \times N} -  \mathbf P_{\mathbf X}) \mathbf Z)}. \label{eq: tau_hat}
\end{align}
It is straightforward to see that the $\hat \sigma^2$ and $\hat \tau^2$ generated from (\ref{eq: sigma_hat}) and (\ref{eq: tau_hat}) are unbiased, presuming that the true model is $y = \mathbf X\beta + \mathbf Z \upsilon + \epsilon$. For consistency, some additional assumptions are needed, which we will discuss later in this section. 

Returning to our two-step procedure and high-dimensional setup, Step 1
identifies a candidate low-dimensional sub-model, which is used in
Step 2 to obtain variance component estimates. We do not require the
candidate model to encompass the truth; however, $\lambda_L$ should be
such that $\hat S$, from Step 1, reliably captures the indices of the
`strong' signals in $\beta^*$.  The idea is that missing `weak'
signals only negligibly affect the accuracy of $\hat \sigma^2$ and
$\hat \tau^2$ in Step 2. We now show that this two-step procedure yields consistent estimators
$\hat \sigma^2$ and $\hat \tau^2$
in the setting where $N \to \infty$ (specifically, $n$ is fixed, but
the number of groups $M \to \infty$) and $d^2\log p/M = o(1)$,
provided some additional technical assumptions hold. From here on,
this will also be the scaling assumed for Theorem \ref{thm: pvalue},
as well as Corollary \ref{cor: pvalue}. We first present the
assumptions necessary for consistency and then formally state the theorem. 

For $\xi > 1$, define the cone
\begin{equation}\label{eq: cone}
\mathcal C(\xi, S) = \{u \in \mathbb R^p~:~\|u_{S^c}\|_1 \le \xi\|u_S\|_1\}. 
\end{equation}
\begin{assumption} \label{a: filter} 
For some constant $\xi > 1$, 
\begin{equation}\label{eq: zeta}
\zeta \equiv \inf \left\{\frac{\|\mathbf{\hat \Sigma} u\|_\infty}{\|u_A\|_\infty} : u \in \mathcal C_-(\xi, S), |A \backslash S| \le p \right\} \gtrsim 1
\end{equation}
with
$
\mathcal C_-(\xi, S) \equiv \{u  : u \in \mathcal C(\xi, S), ~ u_j\mathbf \Sigma_{j, \cdot} u \le 0 ~\forall j \notin S\}
$
the sign-restricted version of (\ref{eq: cone}).
\end{assumption}
The quantity $\zeta$ in (\ref{eq: zeta}) 
is defined more generally in \cite{ye2010}, where it is termed a sign-restricted cone invertibility factor (SCIF). We have the following lemma.

\begin{lemma}\label{lem: shat}
Suppose Assumption \ref{a: filter} holds, and let $\lambda_L$ be defined by (\ref{eq: lasso_lambda}) (or \ref{eq: lambdaL})
for some small $\varepsilon > 0$ and $\xi$ as in Assumption \ref{a: filter}. If $u^* \le \lambda_L(\xi - 1)/(\xi + 1)$, then
\begin{equation}
\| \hat\beta^L - \beta^*\|_\infty \le \frac{\lambda_L + u^*}{\zeta} \le \frac{2\xi\lambda_L}{(\xi+1)\zeta}. 
\end{equation}
\end{lemma}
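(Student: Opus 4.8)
The plan is to read this as a sign-restricted cone invertibility factor (SCIF) bound for the Lasso, in the spirit of \cite{ye2010}, with the mixed-model error $\mathbf Z\upsilon + \epsilon$ playing the role of the noise. Write $h \equiv \hat\beta^L - \beta^*$ for the estimation error, set $\eta \equiv \mathbf Z\upsilon + \epsilon$ so that $y = \mathbf X\beta^* + \eta$, and let $u^* \equiv \|\mathbf X^T\eta/N\|_\infty$ be the scaled gradient noise appearing in the definition of $\lambda_L$. The argument breaks into three parts: (i) turn the Lasso stationarity conditions into an $\ell_\infty$ bound on $\hat{\mathbf\Sigma}h$; (ii) show $h$ lies in the sign-restricted cone $\mathcal C_-(\xi, S)$; and (iii) invoke the definition of $\zeta$ in \eqref{eq: zeta} to convert the bound on $\|\hat{\mathbf\Sigma}h\|_\infty$ into a bound on $\|h\|_\infty$.

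For part (i), I would start from the subgradient optimality condition, which (after the factor of $2$ in the penalty cancels against the $2/N$ from the quadratic term) reads $\hat{\mathbf\Sigma}h = \mathbf X^T\eta/N - \lambda_L\hat g$ for some $\hat g \in \partial\|\hat\beta^L\|_1$ with $\|\hat g\|_\infty \le 1$; the triangle inequality then gives $\|\hat{\mathbf\Sigma}h\|_\infty \le u^* + \lambda_L$. For the $\ell_1$ cone membership in part (ii), I would use the basic inequality from comparing the Lasso objective at $\hat\beta^L$ and at $\beta^*$: bounding the cross term by $2\eta^T\mathbf Xh/N \le 2u^*\|h\|_1$ and using $\|\beta^*\|_1 - \|\hat\beta^L\|_1 \le \|h_S\|_1 - \|h_{S^c}\|_1$ (valid since $\beta^*_{S^c} = 0$), then discarding the nonnegative prediction term, yields $(\lambda_L - u^*)\|h_{S^c}\|_1 \le (\lambda_L + u^*)\|h_S\|_1$. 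The hypothesis $u^* \le \lambda_L(\xi - 1)/(\xi + 1)$ makes $(\lambda_L + u^*)/(\lambda_L - u^*) \le \xi$, so $h \in \mathcal C(\xi, S)$.

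The step I expect to need the most care is the sign restriction defining $\mathcal C_-(\xi, S)$, namely $h_j(\hat{\mathbf\Sigma}h)_j \le 0$ for every $j \notin S$, since this is exactly the ingredient that licenses the use of $\zeta$. For such $j$ one has $\beta^*_j = 0$, hence $h_j = \hat\beta^L_j$: if $\hat\beta^L_j = 0$ the product vanishes, and if $\hat\beta^L_j \ne 0$ the coordinatewise optimality condition forces $\hat g_j = \sign(h_j)$, so that $h_j(\hat{\mathbf\Sigma}h)_j = h_j(\mathbf X^T\eta/N)_j - \lambda_L|h_j| \le |h_j|(u^* - \lambda_L) \le 0$, again using $u^* < \lambda_L$. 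With $h \in \mathcal C_-(\xi, S)$ in hand, part (iii) is immediate: choosing $A = \{1, \ldots, p\}$ in \eqref{eq: zeta} gives $\zeta \le \|\hat{\mathbf\Sigma}h\|_\infty/\|h\|_\infty$, so $\|h\|_\infty \le \|\hat{\mathbf\Sigma}h\|_\infty/\zeta \le (\lambda_L + u^*)/\zeta$ by part (i), and the second inequality in the claim follows by substituting $u^* \le \lambda_L(\xi-1)/(\xi+1)$ once more. Everything outside the sign-restriction check is routine Lasso bookkeeping; the conceptual crux is verifying that the error vector respects the sign constraint so that the SCIF lower bound genuinely applies.
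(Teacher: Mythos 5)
Your proposal is correct and follows essentially the same route as the paper's proof: both bound $\|\hat{\mathbf\Sigma}h\|_\infty \le u^* + \lambda_L$ via the KKT conditions, verify the sign restriction $h_j(\hat{\mathbf\Sigma}h)_j \le 0$ for $j \notin S$ coordinatewise, and then invoke the SCIF $\zeta$ exactly as in \cite{ye2010}. The only cosmetic difference is that you establish the $\ell_1$-cone membership $h \in \mathcal C(\xi,S)$ from the basic inequality (objective comparison), whereas the paper extracts it from the same KKT identity by testing against $h' = h$; both are routine and equivalent, and your final substitution $u^* \le \lambda_L(\xi-1)/(\xi+1)$ correctly yields the stated constant $2\xi/((\xi+1)\zeta)$.
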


In the proof of Lemma \ref{lem: shat} (provided in the Appendix),  SCIF naturally appears when deriving an upper bound for $\|\hat \beta^L - \beta^*\|_\infty$. Lemma \ref{lem: shat} assumes that Assumption \ref{a: filter} is satisfied, and that $\lambda_L$ in Step~1 is chosen such that 
\begin{equation}\label{eq: lambdaL}
\lambda_L = \frac{(\xi + 1)}{(\xi-1)}\sqrt{\frac{2(\sigma^{*2} + \tau^{*2} qn)(\log p - \log(\varepsilon/2))}{N}}  \asymp \sqrt{\frac{ \log p}{qM}} = o(1),
\end{equation}
with $\xi$ as in Assumption \ref{a: filter}. It then establishes that 
\[
	\|\hat \beta^L - \beta^*\|_\infty \le 2\xi\lambda_L/\zeta(\xi + 1) = o(1), 
\] 
with probability exceeding $1 - \varepsilon$, where $\varepsilon > 0$ can be taken arbitrarily small.  A direct implication is that if the lemma's conditions are satisfied, $S \backslash \hat S$ only includes indices corresponding to `weak' signals in $\beta^*$ of magnitude less than $4\xi\lambda_L/\zeta(\xi + 1) = o(1)$ with close to certainty, which is part of what Step 1 sets out to achieve. 

\begin{assumption} \label{a: nprime} 
There exists an integer $N' \lesssim d$ such that for the same constant $\xi > 1$ as in Assumption \ref{a: filter},
\begin{equation}
\frac{d\xi^2}{\psi^2(\xi, S)} < \frac{N'}{\psi_+(N', S)},
\end{equation}
where 
\begin{equation}\label{eq: compat}
\psi(\xi, S) = \min \left\{\frac{d^{1/2}\|\mathbf Xu\|_2}{N^{1/2}\|u_S\|_1}~:~u \in \mathcal C(\xi, S), u \neq 0 \right\}
\end{equation}
and 
$
\psi_+(N', S) = \underset{\mathcal A \cap S = \emptyset, |A | \le N'}{\max}~\nu_{\min}\left(\frac{\mathbf X^T_{\mathcal A}\mathbf X_{\mathcal A}}{N} \right)
$
is the sparse upper eigenvalue of models disjoint with $S$. 
\end{assumption} 
Assumption \ref{a: nprime} is needed to control the number of false positive selections in $\hat S$ from Step 1. In particular, we have
\begin{lemma}\label{lem: fp}
Suppose that Assumption \ref{a: nprime} holds, and $\lambda_L$ is defined according to (\ref{eq: lambdaL}). 
In the event that $u^* \le \lambda_L(\xi - 1)/(\xi + 1)$,  $|\hat S \backslash S| < N'$.
\end{lemma}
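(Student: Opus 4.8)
The plan is to argue by contradiction. Suppose that, on the event $u^* \le \lambda_L(\xi-1)/(\xi+1)$, we had $|\hat S \backslash S| \ge N'$. Then we may fix a subset $A \subseteq \hat S \backslash S$ with $|A| = N'$; by construction $A \cap S = \emptyset$, which is precisely the requirement for the sparse upper eigenvalue $\psi_+(N', S)$ to apply to $\mathbf X_A$. Writing $\hat u = \hat \beta^L - \beta^*$ and $\eta = \mathbf Z\upsilon + \epsilon$ for the marginal noise, the first step is to record that on this event the basic inequality for the Lasso forces $\hat u \in \mathcal C(\xi, S)$ --- the same cone containment already exploited in the proof of Lemma \ref{lem: shat} --- so that the compatibility constant $\psi(\xi, S)$ is available for $\hat u$.

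The second step produces a lower bound on $\|\mathbf X_A^T(y - \mathbf X\hat\beta^L)/N\|_2$ from the Lasso stationarity (KKT) conditions. Every index in $A \subseteq \hat S$ is active, so the corresponding gradient coordinate has magnitude exactly $\lambda_L$; hence $\|\mathbf X_A^T(y - \mathbf X\hat\beta^L)/N\|_2 = \sqrt{N'}\,\lambda_L$. Substituting $y - \mathbf X\hat\beta^L = \eta - \mathbf X\hat u$ and peeling off the noise term, which obeys $\|\mathbf X_A^T\eta/N\|_2 \le \sqrt{N'}\,u^*$ by the definition of $u^* = \|\mathbf X^T\eta/N\|_\infty$ and hence $\le \sqrt{N'}\,\lambda_L(\xi-1)/(\xi+1)$ on the good event, the triangle inequality yields
\[
\big\|\mathbf X_A^T\mathbf X\hat u/N\big\|_2 \;\ge\; \sqrt{N'}\,\lambda_L - \sqrt{N'}\,u^* \;\ge\; \frac{2\sqrt{N'}\,\lambda_L}{\xi+1}.
\]

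The third step bounds the same quantity from above. Since $|A| = N'$ and $A \cap S = \emptyset$, we have $\nu_{\max}(\mathbf X_A^T\mathbf X_A/N) \le \psi_+(N', S)$, so $\|\mathbf X_A^T v\|_2 \le \sqrt{N\psi_+(N', S)}\,\|v\|_2$ for any $v$; taking $v = \mathbf X\hat u/N$ gives $\|\mathbf X_A^T\mathbf X\hat u/N\|_2 \le \sqrt{\psi_+(N', S)/N}\,\|\mathbf X\hat u\|_2$. It remains to control the prediction error $\|\mathbf X\hat u\|_2$: the Lasso basic inequality together with the noise bound gives $\|\mathbf X\hat u\|_2^2/N \lesssim \lambda_L\|\hat u_S\|_1$, and feeding in the compatibility inequality $\|\hat u_S\|_1 \le d^{1/2}\|\mathbf X\hat u\|_2/(N^{1/2}\psi(\xi, S))$ and solving yields $\|\mathbf X\hat u\|_2 \lesssim \lambda_L d^{1/2}N^{1/2}/\psi(\xi, S)$, with the $\xi$-dependent constant tracked explicitly.

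Combining the lower and upper bounds, the factor $\lambda_L$ cancels and, after squaring, one arrives at an inequality of the form $N'/\psi_+(N', S) \lesssim d\xi^2/\psi^2(\xi, S)$. This reverses Assumption \ref{a: nprime}, which asserts the opposite strict inequality; hence the supposition $|\hat S\backslash S| \ge N'$ is untenable and $|\hat S\backslash S| < N'$. The main obstacle is the careful bookkeeping of the absolute constants across the lower and upper bounds, so that the final comparison genuinely contradicts the (constant-free) form of Assumption \ref{a: nprime}; a secondary point requiring care is verifying both the cone containment $\hat u \in \mathcal C(\xi, S)$ and the disjointness $A \cap S = \emptyset$, which are exactly what license the use of $\psi(\xi, S)$ and $\psi_+(N', S)$, respectively.
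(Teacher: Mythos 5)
Your proposal is correct and follows essentially the same route as the paper's proof: the KKT conditions give the lower bound $\lambda_L - u^*$ on the correlations at falsely selected coordinates, the sparse upper eigenvalue $\psi_+(N',S)$ bounds the same quantity via $\|\mathbf X\hat u\|_2$, and the basic inequality plus cone membership plus compatibility control the prediction error, with the ratio $(\lambda_L+u^*)/(\lambda_L-u^*)\le\xi$ yielding exactly the comparison $N'/\psi_+(N',S)\le d\xi^2/\psi^2(\xi,S)$ that Assumption \ref{a: nprime} forbids. The only differences are presentational (your contradiction framing with $|A|=N'$ versus the paper's direct bound on any subset $\mathcal A\subseteq\hat S\setminus S$ of size at most $N'$, and your $\ell_2$-norm phrasing versus the paper's coordinatewise bound followed by summing squares), and your flagged concern about constants is resolved since they cancel exactly.
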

Put simply, Lemma \ref{lem: fp} claims that under Assumption \ref{a: nprime} and our choice of $\lambda_L$ from (\ref{eq: lambdaL}), the total number of false selections in Step 1 is bounded by $N'$, with probability exceeding $1 - \varepsilon$. The proof is provided in the Appendix. 

\begin{assumption}\label{a: z}
Let $\check{\mathbf X}$ be formed by joining any $N'$ columns in $\mathbf X$ with $\beta_j^* = 0$ to the $d$ support columns in $\mathbf X$. For the same $N'$ as in Assumption~\ref{a: nprime}, 
\begin{align}
&\mbox{rank}([\mathbf I_{N \times N} - \mathbf P_{\check{\mathbf X}}]\mathbf Z) = \mbox{rank}(\mathbf Z) = qM, \label{eq: z_rank}\\
&\mathbf Z^T[\mathbf I_{N \times N} - \mathbf P_{\check{\mathbf X}}]\mathbf Z \gtrsim \mathbf I_{qM \times qM}, \label{eq: z_dist}
\end{align}
and the $qM$ singular values of $[\mathbf I_{N \times N} - \mathbf P_{\check{\mathbf X}}]\mathbf Z$, $s_1, \ldots , s_{qM}$, satisfy
\begin{align}
\frac{\left|\{i: ~ s_i \neq 0\}\right|}{\left(\sum_{i = 1}^{qM} s_i^2\right)^2} = o(1). \label{eq: z_sing}
\end{align}
\end{assumption}
By (\ref{eq: z_rank}) in Assumption~\ref{a: z}, the fixed data matrix $\mathbf Z$ has full column rank, and no column vector of $\mathbf Z$ can be represented as a linear combination of the column vectors of any `feasible' $\mathbf X_{\hat S}$, assuming that $\lambda_L$ is chosen according to (\ref{eq: lambdaL}). After all, $N' + d$ is the upper bound on the number of selected fixed effects with probability exceeding $1 - \varepsilon$ (Lemma \ref{lem: fp}). Additionally, by (\ref{eq: z_dist}), the sum of the squared perpendicular distances between each column vector in $\mathbf Z$ and its projection onto the linear subspace spanned by the column vectors of feasible $\mathbf X_{\hat S}$' matrices is at least on the order of $qM$ (substantial, given there are $qM$ columns in $\mathbf Z$). The latter half of Assumption~\ref{a: z} requires all columns of $(\mathbf I_{N \times N} - \mathbf P_{\check{\mathbf X}})\mathbf Z$ are `close' to being linearly independent from one another and `contribute equally' to its rank. In particular, note that (\ref{eq: z_sing}) is satisfied if
\begin{equation}
 c_{1} < \frac{s_j}{s_k} < c_{2} \quad \mbox{for $j \neq k$ and some constants $c_{1}, c_{2} > 0$}.
\end{equation}

It is thus clear that (\ref{eq: z_rank}) and (\ref{eq: z_dist}) imply that random effects must not be confounded with any `feasible' set of fixed effects (from Step~1) while (\ref{eq: z_sing}) implies that the random effects are not confounded from one another. Analogous conditions were shown to be necessary to prove consistency of REML estimators in \cite{jiang1996}. 

\begin{assumption}\label{a: something}
For any $j \in S$ such that $|\beta^*_j| < 4\xi\lambda_L/\zeta(\xi + 1)$, with $\lambda_L$ defined as in (\ref{eq: lambdaL}),
$
\|\mathbf \Gamma_{\tilde{\mathbf X}}x_j\|_\infty \asymp 1.
$
Here, $\mathbf \Gamma_{\tilde{\mathbf X}}\mathbf D_{\tilde{\mathbf X}}\mathbf \Gamma_{\tilde{\mathbf X}}^T$ is the eigen-decomposition of $\tilde{\mathbf X}\left(\tilde{\mathbf X}^T\tilde{\mathbf X}\right)^{-}\tilde{\mathbf X}^T$ (defined for this Assumption) with $\tilde{\mathbf X} = \begin{bmatrix} \mathbf {\check X} & \mathbf Z \end{bmatrix}$, where $\mathbf {\check X}$ is formed by joining any $N'$ columns in $\mathbf X$ with $\beta_j^* = 0$ to the $d-1$ support (excluding $j$) columns in $\mathbf X$. The $N'$ referenced here is the same as in Assumptions \ref{a: nprime} and \ref{a: z}.
\end{assumption}
Assumption~\ref{a: something} requires that covariates corresponding to weak (but non-zero) signals in $\beta^*$ (for which we cannot quantify a bound on the probability they are to be included in $\mathbf X_{\hat S}$) are not too strongly correlated to covariates in $\mathbf X_{\hat S}$ nor covariates associated with the random effects. This somewhat resembles the irrepresentability conditions needed for model selection consistency in Lasso---see, e.g., \cite{zhao2006}. However, the two assumptions are very different: Aside from differences in the quantities involved, a key difference is that the irrepresentability condition requires a very stringent upper bound on non-confounding between fixed effects, whereas Assumption~\ref{a: something} only requires boundedness. As shown in the numerical experiments in the Appendix, as the number of covariates and sparsity of the model vary, Assumption~\ref{a: something} is very likely to be satisfied with even small bounds, whereas the irrepresentability condition is increasingly less likely to hold. 

We can now state our main result on consistency of variance component
estimators, which validates our two-step procedure. 
\begin{theorem}\label{thm: main_var}
Consider $N, p \to \infty$ with $n$ fixed, $M \to \infty$. Furthermore, suppose $p \to \infty$ with $d^2q \log p/M = o(1)$. Suppose Assumptions \ref{a: filter}-\ref{a: something} are satisfied and $\lambda_L$ is chosen according to (\ref{eq: lambdaL}) with $\varepsilon ~\propto~ 1/p$. Then, $\hat \sigma^2$ and $\hat \tau^2$ are consistent for $\sigma^{*2}$ and $\tau^{*2}$, respectively, i.e.,
\begin{align}
|\hat \sigma^2 - \sigma^{*2}| = |\hat \tau^2 - \tau^{*2}| = o_P(1) \quad (N, p \to \infty).
\end{align}
\end{theorem}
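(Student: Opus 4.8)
The plan is to prove consistency of $\hat\sigma^2$ and $\hat\tau^2$ by analyzing the closed-form expressions in (\ref{eq: sigma_hat}) and (\ref{eq: tau_hat}), which are built from the misspecified model (\ref{eq: wrong}) using the data-dependent design $\tilde{\mathbf X}_{\hat S} = [\mathbf X_{\hat S} ~ \mathbf Z]$. The key difficulty is that $\hat S$ is random and may omit weak signals in $S$, so these estimators are \emph{not} the unbiased estimators one would obtain if $\hat S$ contained the true support. My strategy is to condition on a high-probability ``good event'' $\mathcal E$ on which the conclusions of Lemmas~\ref{lem: shat} and \ref{lem: fp} hold: namely $\|\hat\beta^L - \beta^*\|_\infty = O(\lambda_L)$, the false positives satisfy $|\hat S\setminus S| < N'$, and every signal missed by $\hat S$ is ``weak'' (magnitude $< 4\xi\lambda_L/\zeta(\xi+1)$). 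Since $\lambda_L \asymp \sqrt{\log p/(qM)}$ and $\varepsilon \propto 1/p$, this event has probability $\to 1$. It then suffices to show convergence in probability \emph{on} $\mathcal E$.

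\textbf{Decomposing the estimators.}

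On $\mathcal E$, the selected fixed-effect columns $\mathbf X_{\hat S}$ capture all strong signals but miss a set $S_{\weak} = S\setminus\hat S$ of weak ones. I would write $y = \mathbf X_{\hat S}\beta^*_{\hat S} + \mathbf X_{S_{\weak}}\beta^*_{S_{\weak}} + \mathbf Z\upsilon + \epsilon$ and substitute into the quadratic forms. The estimators then split into three contributions: (a) the ``correctly specified'' part, which is exactly unbiased for $\sigma^{*2}$ and $\tau^{*2}$ by the construction following (\ref{eq: hend1})--(\ref{eq: hend2}); (b) a random-quadratic-form fluctuation around its mean; and (c) a bias term coming from the omitted weak signals $\mathbf X_{S_{\weak}}\beta^*_{S_{\weak}}$. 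For (a)+(b) I would apply a concentration argument for Gaussian quadratic forms (Hanson--Wright / $\chi^2$-type tail bounds), controlling the variance via the denominators in (\ref{eq: sigma_hat})--(\ref{eq: tau_hat}). This is exactly where Assumption~\ref{a: z} enters: (\ref{eq: z_rank}) guarantees the denominators are well-defined and of order $qM$, (\ref{eq: z_dist}) keeps $\mathbf Z^T(\mathbf I - \mathbf P_{\check{\mathbf X}})\mathbf Z$ bounded below so $\hat\tau^2$ is stable, and the singular-value condition (\ref{eq: z_sing}) forces the variance of the relevant quadratic form to vanish. Since the rank difference and trace grow like $qM \to \infty$, the fluctuation term is $o_P(1)$.

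\textbf{Controlling the weak-signal bias.}

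The main obstacle is term (c). The contamination enters $\hat\sigma^2$ through $\beta^{*T}_{S_{\weak}}\mathbf X_{S_{\weak}}^T(\mathbf I - \mathbf P_{\tilde{\mathbf X}_{\hat S}})\mathbf X_{S_{\weak}}\beta^*_{S_{\weak}}$ (and an analogous term in $\hat\tau^2$). I would bound this by combining two facts: each weak coefficient obeys $|\beta^*_j| < 4\xi\lambda_L/\zeta(\xi+1) = O(\sqrt{\log p/(qM)})$, and there are at most $d$ such terms, while Assumption~\ref{a: something} controls the projection of each weak column onto the span of $\tilde{\mathbf X}$ via $\|\mathbf\Gamma_{\tilde{\mathbf X}}x_j\|_\infty \asymp 1$, preventing any single weak covariate from aligning too strongly with the fitted design. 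Expanding the quadratic form and applying Cauchy--Schwarz, the total bias scales like $d^2\lambda_L^2 \cdot(\text{column norms})$ relative to a denominator of order $qM$; using $\|x_j\|_2^2 = N = nM$ and $\lambda_L^2 \asymp \log p/(qM)$, the leading term is of order $d^2 q\log p/M$, which is $o(1)$ by the scaling hypothesis $d^2 q\log p/M = o(1)$. This is precisely why that rate appears in the theorem, and why Assumption~\ref{a: something} (boundedness rather than the much stronger irrepresentability) is exactly what is needed. Once (a)+(b) and (c) are each shown to be $o_P(1)$, the triangle inequality delivers $|\hat\sigma^2 - \sigma^{*2}| = o_P(1)$; feeding this into (\ref{eq: tau_hat}), where $\hat\sigma^2$ appears linearly with a coefficient $[\mathrm{rank}(\tilde{\mathbf X}) - \mathrm{rank}(\mathbf X)]/\mathrm{tr}(\cdots)$ of order $O(1)$, gives $|\hat\tau^2 - \tau^{*2}| = o_P(1)$ as well.
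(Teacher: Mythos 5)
Your proposal is correct and follows essentially the same route as the paper's proof: conditioning on the screening event of Lemmas~\ref{lem: shat} and~\ref{lem: fp}, expanding the Henderson quadratic forms into an omitted-weak-signal bias part (controlled by the weak-signal bound, Assumption~\ref{a: something}, and the $d^2q\log p/M = o(1)$ scaling) and mean-zero Gaussian quadratic-form fluctuations (controlled via Assumption~\ref{a: z}), then combining. The only cosmetic difference is that you invoke Hanson--Wright-type concentration where the paper simply computes means and variances and applies Chebyshev's inequality together with a law-of-large-numbers argument for the $\chi^2$ terms.
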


Because $|\hat \sigma^2 - \sigma^{*2}|$ and $|\hat \tau^2 - \tau^{*2}|$ are both $o_P(1)$, we can use $\hat \sigma^2$ and $\hat \tau^2$ as plug-in values for $\sigma^{*2}$ and $\tau^{*2}$, respectively. From there, we can form a consistent estimator of $\mathbf \Omega^*$ and normalizing constants $\kappa_j$.

For practical applications, REML can be used as a substitute for
Henderson's Method III for Step 2.
Theory for REML would be a possible avenue for further explorations.

\subsection{An initial estimator for $\beta^*$ and our choice of $C_j$}
To form $\hat \beta^{\init}$, we consider the ordinary least-squares (OLS) fit restricted to $\hat S$, i.e.,
\begin{equation}
\hat \beta^{\init} = \mbox{arg}~\underset{\beta \in \mathbb R^p : \beta_{\hat S^c} = 0}{\min} \| y - \mathbf X \beta \|_2^2.
\end{equation}
We proceed to demonstrate that the error $\hat \beta^{\init} - \beta^*$ is $o(1)$ in $\ell_1$ norm. 
\begin{assumption}\label{a: ols}
For the same $N'$ as in Assumptions \ref{a: nprime}, \ref{a: z}, \ref{a: something}, the sparse lower eigenvalue for models containing $S$ of cardinality smaller than $d + N'$ is constant and greater than $0$,
\[
\psi_-(N', S) = \underset{\mathcal A \supset S , | \mathcal A  \backslash S| \le N'}{\min}~\nu_{\min}\left(\frac{\mathbf X^T_{\mathcal A}\mathbf X_{\mathcal A}}{N} \right) \gtrsim 1,
\]
\end{assumption}
Assumption \ref{a: ols}, in conjunction with previous assumptions and choice of $\lambda_L$ (\ref{eq: lambdaL}), can be used to control the $\ell_1$ norm of the estimation error $\hat \beta^{\init} - \beta^*$. 

\begin{theorem}\label{thm: ols} 
Suppose Assumptions \ref{a: filter}--\ref{a: ols} hold. Under
the same conditions as in Theorem \ref{thm: main_var}, for some universal constant $C > 0$,
\begin{equation}
\| \hat \beta^{\init} - \beta^*\|_1 \le Cd\sqrt{\frac{q\log p}{M}}
\end{equation}
 with probability converging to $1$ as $N, p \to \infty$.  
\end{theorem}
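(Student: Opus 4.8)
The plan is to work on the high-probability ``good event'' supplied by Lemmas~\ref{lem: shat} and \ref{lem: fp} and to reduce the claim to a concentration bound for a Gaussian quadratic form, taken uniformly over all feasible active sets. First I would fix the event $\mathcal E$ on which $u^* \le \lambda_L(\xi-1)/(\xi+1)$, so that by Lemma~\ref{lem: shat} every omitted signal in $S\setminus\hat S$ has magnitude at most $4\xi\lambda_L/\zeta(\xi+1)$, and by Lemma~\ref{lem: fp} the number of false positives satisfies $|\hat S\setminus S| < N'$; with the choice $\varepsilon \propto 1/p$ from Theorem~\ref{thm: main_var}, $\mathbb P(\mathcal E)\to 1$. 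On $\mathcal E$ the set $\mathcal A \equiv S\cup\hat S$ satisfies $\mathcal A\supseteq S$ and $|\mathcal A\setminus S| < N'$, so Assumption~\ref{a: ols} gives $\nu_{\min}(\mathbf X_{\mathcal A}^T\mathbf X_{\mathcal A}/N)\ge \psi_-(N',S)\gtrsim 1$; since $\mathbf X_{\hat S}^T\mathbf X_{\hat S}$ is a principal submatrix of $\mathbf X_{\mathcal A}^T\mathbf X_{\mathcal A}$, Cauchy interlacing yields $\nu_{\min}(\mathbf X_{\hat S}^T\mathbf X_{\hat S})\ge N\psi_-(N',S)$. Hence the restricted OLS is well defined and $\vertiii{(\mathbf X_{\hat S}^T\mathbf X_{\hat S})^{-1}\mathbf X_{\hat S}^T}_2 \lesssim N^{-1/2}$.

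Second, I would decompose the error. Writing $\tilde\beta^*$ for $\beta^*$ restricted to $\hat S$ (so $\beta^*$ on $S\cap\hat S$ and $0$ on $\hat S\setminus S$) and substituting $y=\mathbf X_S\beta^*_S+\mathbf Z\upsilon+\epsilon$ into the OLS normal equations gives
\begin{equation*}
\hat \beta^{\init}-\beta^* = -\,\beta^*_{S\setminus\hat S} \;+\; (\mathbf X_{\hat S}^T\mathbf X_{\hat S})^{-1}\mathbf X_{\hat S}^T\mathbf X_{S\setminus\hat S}\beta^*_{S\setminus\hat S} \;+\; (\mathbf X_{\hat S}^T\mathbf X_{\hat S})^{-1}\mathbf X_{\hat S}^T w,
\end{equation*}
where $w=\mathbf Z\upsilon+\epsilon\sim\mathcal N(0,\mathbf V)$ by \eqref{eq: margin} and the three summands live on disjoint coordinate blocks. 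The first two terms are deterministic given $\hat S$. The omitted signals are all weak, so $\|\beta^*_{S\setminus\hat S}\|_1 \le d\cdot 4\xi\lambda_L/\zeta(\xi+1)\lesssim d\lambda_L$; and combining $\vertiii{(\mathbf X_{\hat S}^T\mathbf X_{\hat S})^{-1}\mathbf X_{\hat S}^T}_2\lesssim N^{-1/2}$ with a sparse (upper) eigenvalue bound $\vertiii{\mathbf X_{S\setminus\hat S}}_2\lesssim N^{1/2}$ and $\|\beta^*_{S\setminus\hat S}\|_2\le\sqrt d\,\lambda_L$ makes the omitted-variable term $O(\sqrt d\,\lambda_L)$ in $\ell_2$, hence $O(d\lambda_L)$ in $\ell_1$ after the factor $\sqrt{|\hat S|}\lesssim\sqrt d$. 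Since $\lambda_L\asymp\sqrt{\log p/(qM)}\le\sqrt{q\log p/M}$, both deterministic terms are of the required order (in fact smaller by a factor $q$).

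Third --- and this is the crux --- I would control the noise term $\eta_{\hat S}\equiv(\mathbf X_{\hat S}^T\mathbf X_{\hat S})^{-1}\mathbf X_{\hat S}^T w$. The difficulty is that $\hat S$ and $w$ are functions of the same data, so $\eta_{\hat S}$ is not a Gaussian vector with fixed covariance. I would remove this dependence by a deterministic union bound: on $\mathcal E$ the realized $\hat S$ lies in the feasible class $\mathcal S_{\mathrm{feas}}=\{T: T \text{ contains all strong signals},\ |T\setminus S| < N'\}$, with $\log|\mathcal S_{\mathrm{feas}}|\lesssim d+N'\log p\lesssim d\log p$ since $N'\lesssim d$. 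For fixed $T$, $\|\eta_T\|_2^2=w^T\mathbf X_T(\mathbf X_T^T\mathbf X_T)^{-2}\mathbf X_T^T w$ is a Gaussian quadratic form with mean $\mathrm{tr}\big((\mathbf X_T^T\mathbf X_T)^{-2}\mathbf X_T^T\mathbf V\mathbf X_T\big)\lesssim qd/M$ (using $\vertiii{\mathbf V}_2\le \sigma^{*2}+\tau^{*2}\vertiii{\mathbf Z}_2^2\lesssim qn$ via $\vertiii{\mathbf Z}_2^2\le\max_m\|\mathbf Z_m\|_F^2\le qn$ and $N=nM$), operator norm $\lesssim q/M$, and Frobenius norm $\lesssim q\sqrt d/M$. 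Applying the Hanson--Wright inequality with a deviation of order $qd\log p/M$ and union-bounding over $\mathcal S_{\mathrm{feas}}$ gives $\max_{T\in\mathcal S_{\mathrm{feas}}}\|\eta_T\|_2^2\lesssim qd\log p/M$ with probability tending to one, whence $\|\eta_{\hat S}\|_1\le\sqrt{|\hat S|}\,\|\eta_{\hat S}\|_2\lesssim\sqrt d\cdot\sqrt{qd\log p/M}=d\sqrt{q\log p/M}$. Intersecting this event with $\mathcal E$ and summing the three contributions yields the stated bound. The main work is the concentration-plus-union step: the tail of the quadratic form must be sharp enough (sub-exponential, via Hanson--Wright) to survive a union over $\lesssim p^{N'}$ sets, and it is precisely this union that produces the $\log p$ factor in the rate. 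A secondary point needing care is arranging the sparse-eigenvalue control so that $\nu_{\min}(\mathbf X_T^T\mathbf X_T)$ and $\vertiii{\mathbf X_{S\setminus T}}_2$ are bounded simultaneously and uniformly over all feasible $T$.
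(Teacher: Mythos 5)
Your proposal is correct in outline and reaches the stated rate, but it takes a genuinely different --- and substantially heavier --- route at the step you yourself identify as the crux. The paper's proof shares your good event and your decomposition: it writes $\hat\beta^{\init}_{\hat S} = \beta^*_{\hat S} + (\mathbf X_{\hat S}^T\mathbf X_{\hat S})^{-1}\mathbf X_{\hat S}^T\mathbf X_{S_O}\beta^*_{S_O} + (\mathbf X_{\hat S}^T\mathbf X_{\hat S})^{-1}\mathbf X_{\hat S}^T(y-\mathbf X\beta^*)$ with $S_O = S\setminus\hat S$ and $\hat\beta^{\init}_{\hat S^c}-\beta^*_{\hat S^c} = -\beta^*_{S_O}$, and it handles the two deterministic pieces essentially as you do. For the noise term, however, the paper needs no new concentration inequality and no union bound: since $\hat S\subseteq\{1,\ldots,p\}$, one has the \emph{pathwise} inequality $\|\mathbf X_{\hat S}^T(y-\mathbf X\beta^*)/N\|_\infty \le u^*$, valid no matter how $\hat S$ depends on the noise; so on the very event $\{u^*\le\lambda_L(\xi-1)/(\xi+1)\}$ you already condition on, the dependence you worry about is simply irrelevant. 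Converting $\ell_\infty\to\ell_2\to\ell_1$ over the at most $N'+d\lesssim d$ coordinates of $\hat S$ and using $\psi_-(N',S)\gtrsim 1$ then gives $\|(\mathbf X_{\hat S}^T\mathbf X_{\hat S})^{-1}\mathbf X_{\hat S}^T(y-\mathbf X\beta^*)\|_1 \le (N'+d)\lambda_L/\psi_-(N',S)\lesssim d\sqrt{q\log p/M}$, the $\log p$ entering through $\lambda_L$ itself rather than through an entropy term. Your Hanson--Wright-plus-union-bound argument over $\mathcal S_{\mathrm{feas}}$ does go through --- the tail $\exp(-cd\log p)$ beats the cardinality $\exp(Cd\log p)$ once the deviation constant is taken large enough --- and it would generalize to settings where no single scalar statistic dominates uniformly over subsets; but here it reconstructs, at considerable cost, exactly what the bound by $u^*$ gives for free. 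Two smaller points of comparison: your appeal to a sparse upper eigenvalue bound $\vertiii{\mathbf X_{S\setminus\hat S}}_2\lesssim N^{1/2}$ is not actually furnished by Assumption \ref{a: nprime}, whose $\psi_+$ is defined only over sets disjoint from $S$ (the paper instead uses the Frobenius-type bound $\|\mathbf X_{S_O}\|_F\le\sqrt{dN}$ together with $\|\beta^*_{S_O}\|_\infty\lesssim\lambda_L$ at that spot, a step that is itself loose); and your interlacing argument giving $\nu_{\min}(\mathbf X_{\hat S}^T\mathbf X_{\hat S})\ge N\psi_-(N',S)$ via $\hat S\cup S$ is a detail the paper glosses over but genuinely needs, since $\hat S$ need not contain $S$ --- there your write-up is more careful than the original.
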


Theorem \ref{thm: ols} implies that we have the following crude bound, based on H\"{o}lder's inequality,
\begin{align}
\left| \kappa_j \sum_{k \neq j} \left(\mathbf P_{\mathbf X^T}\right)_{jk}\left(\hat \beta^{\init}_k - \beta_k^*\right)\right| &\le 
\kappa_j \max_{k \neq j}\left|\left(\mathbf P_{\mathbf X^T}\right)_{jk}\right|\| \hat \beta^{\init} - \beta^*\|_1 \nonumber \\
& \le \kappa_j \max_{k \neq j}\left|\left(\mathbf P_{\mathbf X^T}\right)_{jk}\right| Cd \lambda_L. \label{eq: crude}
\end{align}
The following corollary is a direct consequence of the crude bound (\ref{eq: crude}).
\begin{corollary}\label{cor: cj_choice}
Suppose the conditions in Theorem \ref{thm: ols} are satisfied, and that $d$, the sparsity of $\beta^*$, satisfies
$
d \le C^{-1}\left(M / (q \log p)\right)^{\eta}, 
$
with $C$ as in Theorem \ref{thm: ols} and $\eta \in (0, 1/2)$. Then, 
\begin{equation}
C_j = \underset{k \neq j}{\max}~|\kappa_j (\mathbf P_{\mathbf X^T})_{jk}| \left(\frac{q \log p}{M} \right)^{1/2 - \eta}
\end{equation}
satisfies condition (\ref{eq: c_j}) in Theorem~\ref{thm: pvalue}. 
\end{corollary}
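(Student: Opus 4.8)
The plan is to chain together the two previously established results—Theorem~\ref{thm: ols} and the crude bound \eqref{eq: crude}—and then verify directly that the proposed choice of $C_j$ satisfies the uniform high-probability condition \eqref{eq: c_j} demanded by Theorem~\ref{thm: pvalue}. The target quantity in \eqref{eq: c_j} is $\kappa_j \sum_{k \neq j} (\mathbf P_{\mathbf X^T})_{jk}(\hat\beta^{\init}_k - \beta^*_k)$, and the crude bound already controls this by $\kappa_j \max_{k \neq j}|(\mathbf P_{\mathbf X^T})_{jk}| \, Cd\lambda_L$, so the entire argument reduces to showing that this product is dominated by $C_j$ with probability tending to one.

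First I would recall from \eqref{eq: lambdaL} that $\lambda_L \asymp (q\log p / M)^{1/2}$ (up to the constant factors in its definition), so that $Cd\lambda_L \lesssim Cd\,(q\log p/M)^{1/2}$. Substituting the sparsity hypothesis $d \le C^{-1}(M/(q\log p))^{\eta}$ gives
\begin{align*}
Cd\lambda_L \;\lesssim\; C \cdot C^{-1}\Big(\tfrac{M}{q\log p}\Big)^{\eta}\Big(\tfrac{q\log p}{M}\Big)^{1/2} \;=\; \Big(\tfrac{q\log p}{M}\Big)^{1/2 - \eta}.
\end{align*}
Multiplying through by $\max_{k\neq j}|\kappa_j(\mathbf P_{\mathbf X^T})_{jk}|$ yields exactly the stated form of $C_j$, namely $C_j = \max_{k\neq j}|\kappa_j(\mathbf P_{\mathbf X^T})_{jk}|\,(q\log p/M)^{1/2-\eta}$. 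Since $\eta \in (0,1/2)$, the exponent $1/2-\eta$ is strictly positive, so this choice is a genuine (vanishing-rate) bound and not degenerate.

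The probabilistic step is then immediate: Theorem~\ref{thm: ols} guarantees that the event $\{\|\hat\beta^{\init} - \beta^*\|_1 \le Cd\sqrt{q\log p/M}\}$ holds with probability converging to $1$, and on this event the crude bound \eqref{eq: crude} deterministically forces $|\kappa_j\sum_{k\neq j}(\mathbf P_{\mathbf X^T})_{jk}(\hat\beta^{\init}_k - \beta^*_k)| \le C_j$ \emph{simultaneously for every} $j$, because the bound on $\|\hat\beta^{\init}-\beta^*\|_1$ is a single global event not depending on $j$. Hence the intersection $\bigcap_{j=1}^p\{\cdots\}$ in \eqref{eq: c_j} contains this single high-probability event, and its probability tends to one, establishing condition \eqref{eq: c_j}.

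The only genuine subtlety—and the step I would treat most carefully—is the \emph{uniformity over $j$}: condition \eqref{eq: c_j} requires all $p$ events to hold together, and $p\to\infty$. The crux is that the $\ell_1$ control from Theorem~\ref{thm: ols} is one event that simultaneously bounds every coordinate-indexed sum via H\"older's inequality, so no union bound over the $p$ indices is needed and no additional $\log p$ factor is incurred beyond what is already absorbed into $\lambda_L$. I would make sure to state explicitly that the factor $\max_{k\neq j}|(\mathbf P_{\mathbf X^T})_{jk}|$ is the correct per-coordinate scaling coming out of H\"older, and that $C_j$ is allowed to depend on $j$ and on $(N,p)$ precisely as permitted in the statement of Theorem~\ref{thm: pvalue}. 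Beyond this, the corollary is a direct algebraic substitution into an already-proven deterministic inequality, so there is no serious remaining obstacle.
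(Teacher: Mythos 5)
Your proposal is correct and follows exactly the paper's route: the paper treats this corollary as a direct consequence of the crude bound \eqref{eq: crude}, obtained by substituting the $\ell_1$-error bound of Theorem~\ref{thm: ols} and the sparsity hypothesis $d \le C^{-1}(M/(q\log p))^{\eta}$ into that bound, which is precisely your algebra. Your additional remark that the single high-probability event from Theorem~\ref{thm: ols} gives the intersection over all $j$ in \eqref{eq: c_j} without a union bound is a correct and worthwhile elaboration of a point the paper leaves implicit.
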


\section{Numerical experiments}\label{sec: simulations}

\subsection{A practical choice for $\lambda_L$} 

In practical applications, we run into the issue of not being able to set $\lambda_L$ according to (\ref{eq: lambdaL}), as it involves knowing $\tau^*$ and $\sigma^*$. However, we can derive a (slightly ad-hoc) approximation of what $\lambda_L$ should be. Upon closer examination of the proof of Lemma \ref{lem: conc}, we can substitute the term $\sigma^{*2} + \tau^{*2}qn$ with $\nu_{\max}(\mathbf V(\sigma^*, \tau^*)) = \sigma^{*2} + \tau^{*2}\nu_{\max}(\mathbf Z^T\mathbf Z)$. The latter can be approximated according to the following procedure, assuming that the ratio $\tau^{*}/\sigma^{*}$ is not too small:

\begin{enumerate}
\item Apply scaled lasso \citep{sun2012} to obtain an initial `average' noise estimate. The solution to the scaled lasso problem is characterized by
\begin{equation}
(\hat \beta^{\mbox{{\tiny scaled}}}, \hat \sigma^{\mbox{{\tiny scaled}}}) \in \mbox{arg}~\underset{\beta, \sigma}{\min}~ \frac{\|y - \mathbf X \beta\|_2^2}{2N\sigma} + \frac{\sigma}{2} + \lambda_{\mbox{{\tiny univ}}} \|\beta\|_1
\end{equation}
with $\lambda_{\mbox{{\tiny univ}}} = \sqrt{2 \log p/N}$. 
\item Take $\lambda_L =\hat \sigma^{\mbox{{\tiny scaled}}}\lambda_{\mbox{{\tiny
        univ}}}
 \rho_Z$ with
\begin{equation}\label{eq: adjust}
 \rho_Z  = \sqrt{\frac{\nu_{\max}(\mathbf Z^T\mathbf Z)}{\mbox{tr}(\mathbf Z^T \mathbf Z)/N}}.
\end{equation}
\end{enumerate} 
We provide a heuristic justification. Ignoring the finer details involved in the theory, for the scaled lasso, $(\hat \sigma^{\mbox{{\tiny scaled}}})^2$ serves as a good approximation for $\| \epsilon^* \|^2_2/N$,
where we have defined $\epsilon^* = y - \mathbf X \beta^*$. In linear models, $\epsilon^*$ holds i.i.d.~observations drawn from a $N(0, \sigma^{*2})$ distribution. By the law of large numbers,  $\| \epsilon^* \|^2_2/N$ converges to $\sigma^{*2}$ for large $N$. 
Under a heteroskedastic error model, with  $\epsilon^*$ independent and $\epsilon_i^* \sim N(0, \sigma^{*2}_i)$, we can match $\| \epsilon^* \|^2_2/N $ to its expectation, which is given by $\sum_{i=1}^N \sigma^{*2}_i/N$, so $(\hat \sigma^{\mbox{{\tiny scaled}}})^2$ can be used to approximate the `average' noise level. If $\epsilon^* \sim N\left(0, \mathbf V(\sigma^*, \tau^*)\right)$, then using a similar expectation matching argument, we can expect $(\hat \sigma^{\mbox{{\tiny scaled}}})^2$ to act as a surrogate for
\begin{equation}
\sigma^{*2} + \frac{\tau^{*2}\mbox{tr}(\mathbf Z\mathbf Z^T)}{N},
\end{equation}
which follows from the fact that $\|\mathbf \Gamma \epsilon^*\|_2 =
\|\epsilon^*\|_2$ for any $N \times N$ orthogonal matrix $\mathbf
\Gamma$ (overloading $\mathbf \Gamma$ from (\ref{eq: r2})). What we
actually need is $\sigma^{*2} + \tau^{*2}\nu_{\max}(\mathbf Z^T\mathbf
Z)$. Then in the scenario where ratio $\tau^{*2}/\sigma^{*2}$ is
not too small, $\rho_Z$ from (\ref{eq: adjust}) should give us a choice of $\lambda_L$ that is close to the desired one from (\ref{eq: lambdaL}). Our choice of $\lambda_L$ is constructed according to the above procedure for all subsequent numerical experiments.

\subsection{A look into $p$-values} \label{sec: pval}
Denote the `unblocked' version of $\mathbf Z$ as $\mathbf Z_u$; i.e., $\mathbf Z_u$ is a $N \times q$ matrix formed by row-wise concatenating the $M$ diagonal blocks in $\mathbf Z$. We generate data from model (\ref{eq: mixed}) according to following schemes, setting $M = 25$ and $n = 6$:
\begin{enumerate}
\item[(M1)] For $p \in \{300, 600\}$, $q \in \{1, 2\}$, we construct
  $\begin{bmatrix} \mathbf X & \mathbf Z_u \end{bmatrix}$ from $N$
  i.i.d.~realizations from a $\mathcal N(0, \mathbf \Phi^*)$
  distribution with $\mathbf \Phi^* = \{ \phi_{jk} \}$ a
  $(p + q) \times (p + q)$ matrix with $\phi^*_{jk} =
  0.2^{|j-k|}$. $\mathbf X$ and $\mathbf Z$ (the `blocked' version)
  are then normalized such that $\|x_j\|_2^2 = N$ and
  $\|z_j\|_2^2 = n$ for all $j$. For $b \in \{0.5, 1\}$, we set the $p$-dimensional vector of fixed regression coefficients to
\begin{equation*}
\beta = [b, \ldots , b, 0, \ldots , 0],
\end{equation*}
where, the first $d \in \{5, 10\}$ entries of $\beta$ are nonzero. The variance parameters $\sigma^{*}$
and $\tau^{*}$ are set to $0.5$ and $1$ respectively.
\item[(M2)] \label{item: m2} Same as (M1) except with $\mathbf \Phi^* = \mathbf I_{(p+q) \times (p+q)}$.
\end{enumerate}
The numerical experiments are setup similarly to those in
\cite{buhlmann2013} and \cite{schelldorfer2011}.  We set the ridge penalty parameter $\lambda$ to $1/N$ for all experiments. Additionally, we set $C_j$ according to Corollary \ref{cor: cj_choice} with $\eta = 0.005$. 

We first consider null hypotheses of the form
\begin{align}
H_{0, j}: \beta_j &= 0. \label{eq: single_hyp} 
\end{align}
We consider decision rules based on a significance level $\alpha = 0.05$, i.e.,
we reject $H_{0,j}$ if the event $E_j=\{\varrho_j \le
0.05\}$ occurs, where $\varrho_j$ is as defined in (\ref{eq: pj}). 
Following \cite{buhlmann2013}, we evaluate the performance of the tests based on the type I error, averaged over the non-support indices,
\begin{equation}
\mbox{Avg. type I error} = (p - d)^{-1} \sum_{j \in S^c} \hat{\mathbb P}(E_j),
\end{equation}
and the power, averaged over the support indices,
\begin{equation}
\mbox{Avg. power} = d^{-1} \sum_{j \in S} \hat{\mathbb P}(E_j),
\end{equation}
where $\hat{\mathbb P}$ denotes the empirical probability over 1000 simulations.  The results, presented in Figure~\ref{fig: individual_tests}, suggest that type I error is well-controlled for all combinations of $p$, $q$, $b$ and $d$ for the two different models. Power is high in most scenarios, but appears to vary with the aforementioned quantities, noticeably decreasing with $b$. However, this is to be expected.

\begin{figure}[t]
    \centering
    \begin{subfigure}[b]{0.47\textwidth}
        \includegraphics[width=\textwidth]{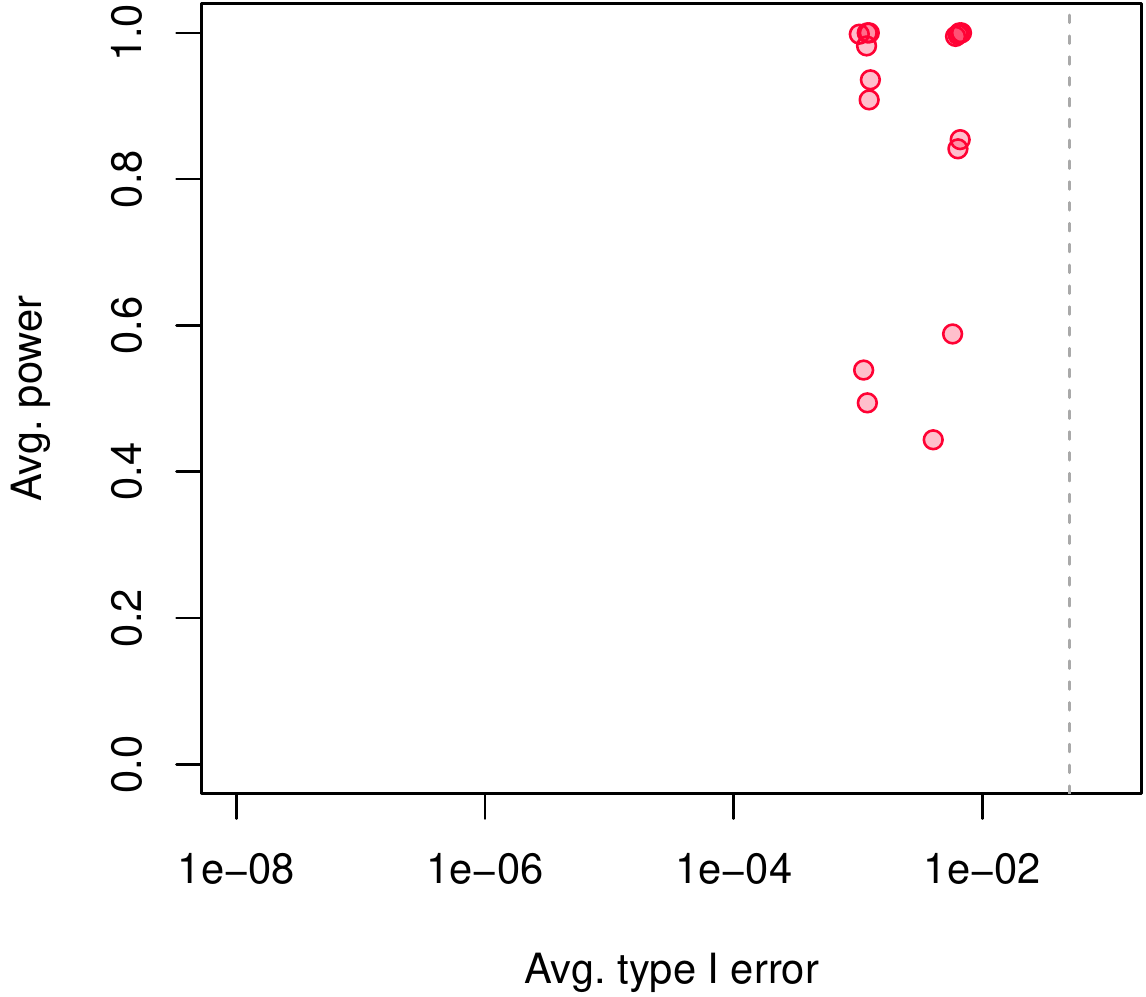}
        \caption{M1}
    \end{subfigure}
    ~ 
    \begin{subfigure}[b]{0.47\textwidth}
        \includegraphics[width=\textwidth]{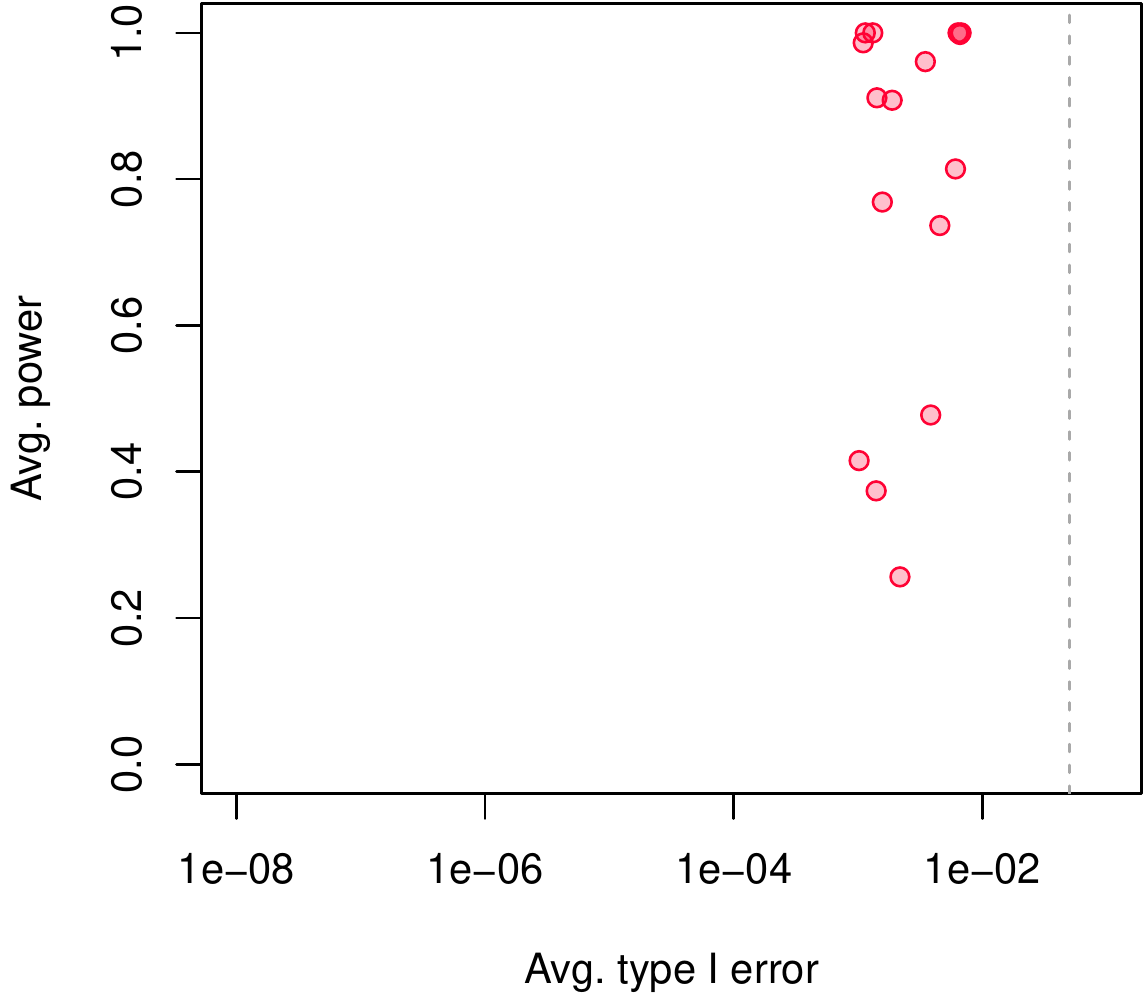}
        \caption{M2}
    \end{subfigure}
    \caption{Average power vs. average type I error for testing groups of coefficients under the two models for different combinations of $p$, $q$, $b$ and $d$.}\label{fig: individual_tests}
\end{figure}

We also consider null hypotheses of the form
\begin{align}
H_{0, G}: \beta_j &= 0 ~\mbox{for all}~ j \in G.  \label{eq: multi_hyp}
\end{align}
with $G$ taken either to be $\{1, \ldots , 100\}$ ($G1$), or $\{101,
\ldots , 200\}$ ($G2$). By construction, the hypothesis $H_{0, G1}$
should be accepted while $H_{0, G2}$ rejected. We consider decision
rules based on a significance level $\alpha = 0.05$ and 
reject $H_{0, G}$ if the event $E_G=\{\varrho_G \le 0.05\}$ occurs,
with $\varrho_G$ defined in (\ref{eq: pg}). To evaluate the
performance of these tests, we consider type I error and power, which can be represented by
$\hat{\mathbb P}\left(E_{G2}\right)$ and $\hat{\mathbb P}\left(E_{G1}\right)$, respectively, where again, $\hat{\mathbb P}$ denotes the empirical probability over 1000 simulations. Figure~\ref{fig: group_tests} visualizes the results.

\begin{figure}[h]
    \centering
    \begin{subfigure}[b]{0.47\textwidth}
        \includegraphics[width=\textwidth]{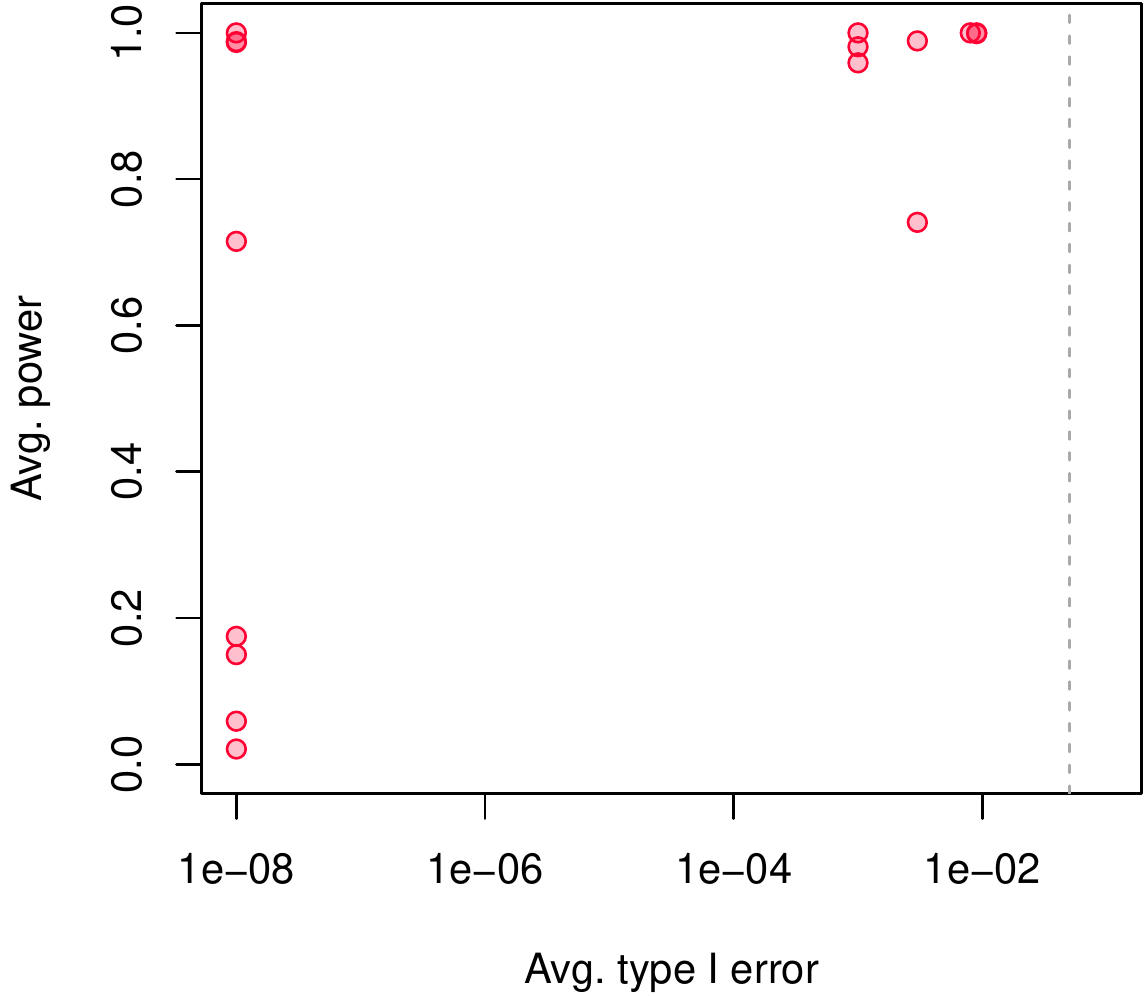}
        \caption{M1}
    \end{subfigure}
    ~ 
    \begin{subfigure}[b]{0.47\textwidth}
        \includegraphics[width=\textwidth]{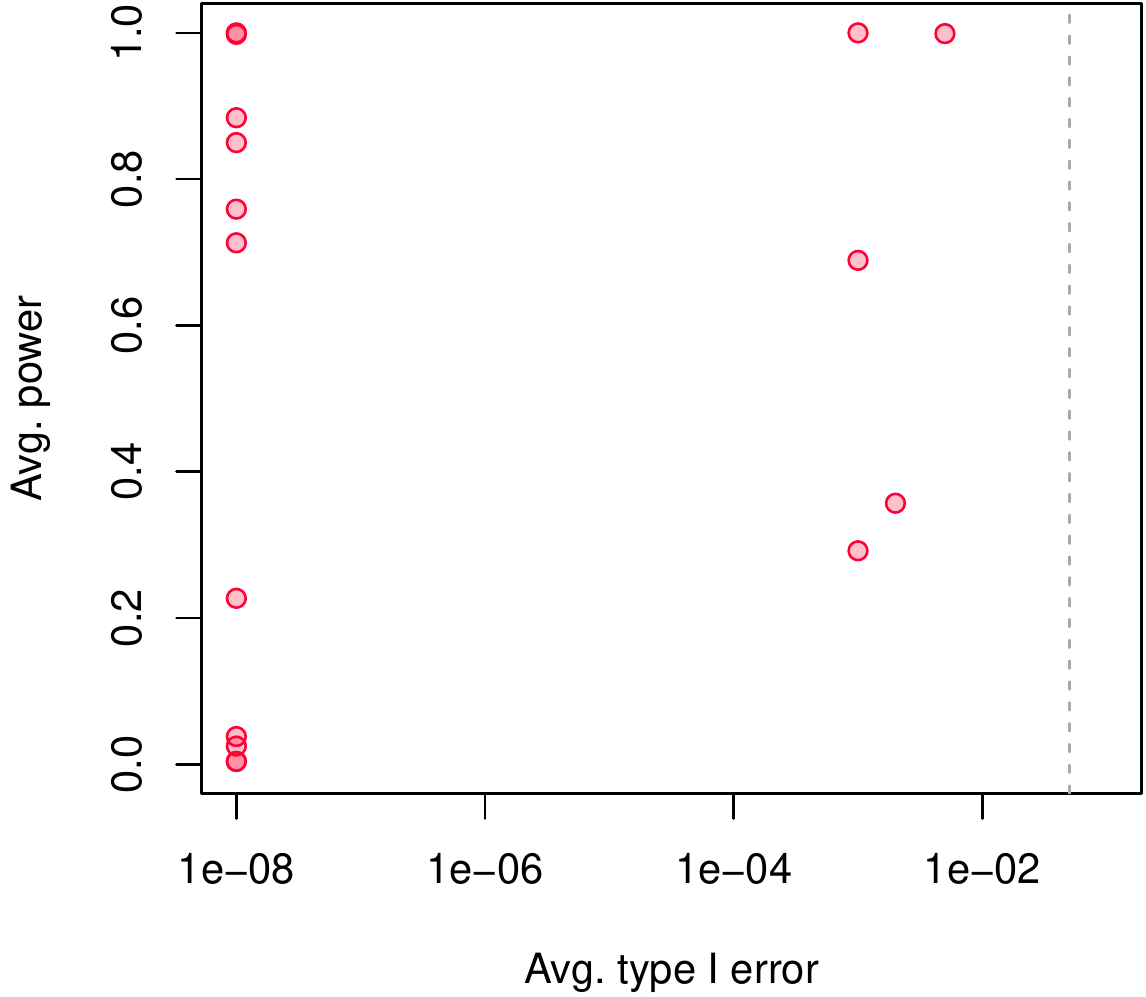}
        \caption{M2}
    \end{subfigure}
    \caption{Average power vs. average type I error for testing groups of coefficients under the two models for different combinations of $p$, $q$, $b$ and $d$.}\label{fig: group_tests}
\end{figure}

\subsection{Comparisons with existing methods}
In this section, we conduct a short numerical example to examine whether one could `naively' apply inferential procedures for high-dimensional linear models to obtain inference for parameters in mixed models.

Consider Model (M1) from Section \ref{sec: pval} in the instance of
$p = 300$ and $q = 1$. Let
$ \beta^* = [0.05, 2, 4, 3, 0.1, 0, \ldots, 0].  $ We compare our
method against
\begin{enumerate}
\item \textit{ridge}-based inference procedure of \citet{buhlmann2013}, which is an analogue of our method developed for high-dimensional linear models;
\item \textit{lasso}-based inference procedure of \citet{vdg2014}, which entailes de-sparsifying a lasso estimator.
\end{enumerate}
The differences are fairly evident when comparing confidence interval coverage. For any $\alpha \in (0,1)$, define $\mathbb Q_\alpha[W_j]$ as the $\alpha$-th quantile of the distribution of $W_j$. Under the conditions of Theorem \ref{thm: pvalue}, if the assumed model is correct, (\ref{eq: distribution}) suggests that confidence intervals of the form
\[
\left[\frac{\hat\beta_j^{\corr}}{\left(\mathbf P_{\mathbf X^T}\right)_{jj}} - \frac{\mathbb Q_{1-\alpha/2}\left[W_{j}\right] + C_j}{\left(\mathbf P_{\mathbf X^T}\right)_{jj}}, \frac{\hat\beta_j^{\corr}}{\left(\mathbf P_{\mathbf X^T}\right)_{jj}} + \frac{\mathbb Q_{1-\alpha/2}\left[W_{j}\right] + C_j}{\left(\mathbf P_{\mathbf X^T}\right)_{jj}} \right]
\]
should guarantee coverage of at least $(1-\alpha)$\%. Rather than setting $C_j$ according to Corollary \ref{cor: cj_choice}, we set them to be the same as the `$C_j$-analogues' from \cite{buhlmann2013}, to make the two methods comparable. Our choice of $C_j$ are larger than theirs, so if anything, this ad-hoc decision provides \cite{buhlmann2013}'s method an unfair advantage. In Figure \ref{fig: hdi_conf_int}, we examine 95\% confidence interval coverage for the three methods, based on the above modifications. 

\begin{figure}[t]
\centering
\includegraphics[scale = 0.8]{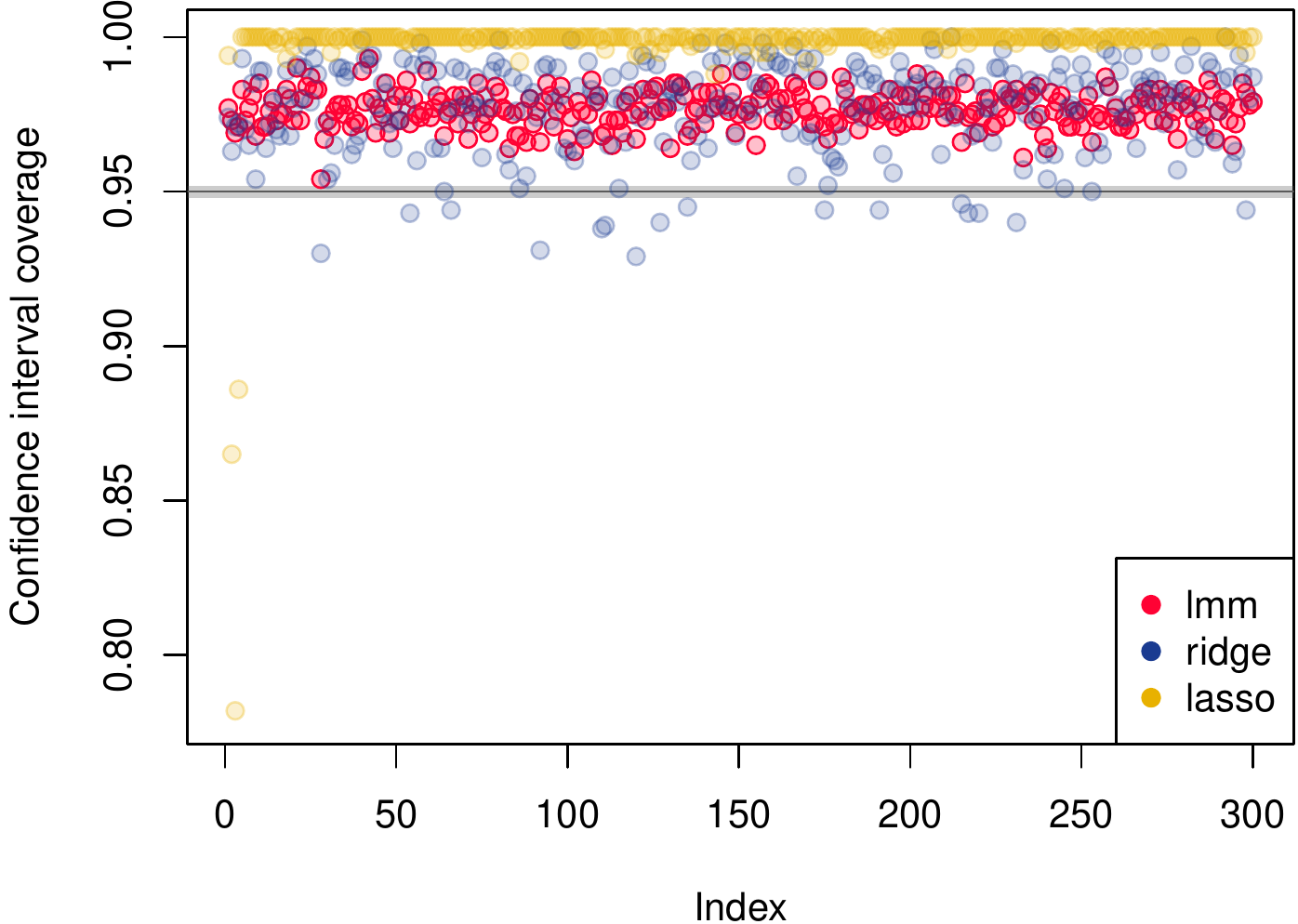}
\caption{Confidence interval coverage for $\beta^*_j$, $j = 1, \ldots, p$; target coverage is 95\% (with 1000 simulations, the standard deviation is $\sim$0.69\%). Here, \texttt{lmm} (\ref{firstcolour}) refers to our method; ridge (\ref{secondcolour}) to the method of  \citet{buhlmann2013}; and \texttt{lasso} (\ref{thirdcolour}) to the method of \citet{vdg2014}.}
\label{fig: hdi_conf_int}
\end{figure}

Overall, our method, which accounts for random effects, performs best
at attaining the target guaranteed coverage across all $\beta^*_j$'s,
compared to the methods proposed in \cite{buhlmann2013} and
\cite{vdg2014}. While \cite{buhlmann2013}'s method does come close,
coverage falls short at 16 indices: minimum coverage achieved was
92.9\% (with 1000 simulations, this is a statistically significance difference from
0.95). At initial glance it appears that the lasso-based method from
\cite{vdg2014} performs quite well; however, a closer examination of the results reveals otherwise. Specifically, the lasso-based method does very poorly over some of the active coefficients, as made evident in Table~\ref{tab:conf_int}.

\begin{table}
\caption{\label{tab:conf_int}Confidence interval coverage  for signals
  $\beta^*_j$, $j = 1, \ldots, 5$; target coverage is 95\%.}
\centering
{
  \begin{tabular}{l|c|c|c}
& Our method &  \cite{buhlmann2013} & \cite{vdg2014} \\ 
\hline
$\beta^*_1$ &  0.977 & 0.974 & 0.994  \\
$\beta^*_2$ &  0.973 & 0.963  &  0.865 \\
$\beta^*_3$ &  0.969 & 0.971  & 0.782  \\
$\beta^*_4$ &  0.971 & 0.972  & 0.886  \\
$\beta^*_5$ &  0.983 &  0.993 & 1.000 \\
\end{tabular}}
\end{table}

\section{An application to riboflavin production data}\label{sec: prac}

In this section, we apply our proposed methodology to data on
riboflavin (vitamin $B_2$) production by \textit{Bacillus
  subtilis}. The data is made publicly available by
\cite{buhlmann2014high}; the original data was provided by DSM
(Switzerland).  The dataset,  referenced as \textit{riboflavinGrouped}, has $M = 28$ specimens measured at two to six time points, resulting in $N = 111$ observations in total. For each specimen at each time point, we record a single real valued response variable, the log-transformed riboflavin production rate, as well as the expression levels of $p = 4088$ genes. We are interested in identifying which gene is significantly correlated with riboflavin production. 

To account for correlations induced by repeated measurements, a natural model to consider is the random intercept model, in which we assume that 
\begin{align}\label{eq: rint}
y_m = \mathbf X_m\beta^* + v_m + \epsilon_m,
\end{align}
with $v_m$, $m = 1, \ldots , M$ i.i.d.~with $v_m \sim N(0, \tau^{*2})$, and $\epsilon_m$, $m = 1, \ldots , M$, independent with $\epsilon_m \sim N(0, \sigma^{*2}\mathbf I_{n_m \times n_m})$, and generated independently of $v_1, \ldots , v_m$. Note that (\ref{eq: rint}) can be represented by (\ref{eq: mixed}) with the $\mathbf Z_m$'s taken to be column vectors of 1s of lengths $n_m$. Most of the theoretical results assume the $n_m$'s are equal, but it is straightforward to show the results hold so long as $n_m$ are on the same order of magnitude, as they are here. 

We apply our proposed framework and compute the marginal $p$-values for testing $\beta_j^* = 0$. Controlling the family-wise error rate (FWER) at 5\%, via a simple Bonferroni correction, we find a single significant gene in riboflavin production: \textit{YXLD-at}. This result matches previous findings by \cite{javanmard2014} and \cite{meinshausen2009split} using an homogeneous dataset with $N = 71$ samples provided by the same source (\textit{riboflavin} in \cite{buhlmann2014high}). Like us, \cite{meinshausen2009split} makes a single discovery, \textit{YXLD-at}, while \cite{javanmard2014} also labels \textit{YXLE-at} as significant. The method of \cite{buhlmann2013}, on the other hand, makes no discoveries.

\section{Discussion}\label{sec: discussion}

We presented a new framework for constructing asymptotically valid $p$-values and confidence intervals for the fixed effects in high-dimensional linear mixed effect models. It entails de-biasing a `naive' ridge estimator, whose asymptotic distribution we can approximate sufficiently well if the number of independent groups of observations $M$ scales at least with $d^2q\log p$.  
Simulation studies in high-dimensional suggest that our method provides good control of type-I error.  It also provides good results for a riboflavin dataset with group structure, where we confirmed results obtained in earlier work based on a homogeneous dataset from the same source \citep{javanmard2014, meinshausen2009split}.

Several extensions to our methodology would be of interest for future
work. First, our proposal for selecting the tuning parameter $\lambda_L$ relies on the assumption that $\tau^{*2}/\sigma^{*2}$ is not too small. 
Although it appears to work well in practice, one could also consider
an iterative scheme that repeatedly updates $\lambda_L$ based on the
resultant estimates of $\sigma^{*2}$ and $\tau^{*2}$: this can be
readily implemented in practice but may be difficult to validate
theoretically.  Second, here we required the number of random
effects $q$ to be quite small (treated as constant in the theory). This assumption can be relaxed by, e.g., taking $\mathbf \Psi^*$ to be a
general diagonal matrix, i.e., $\mathbf \Psi^* = \mbox{diag}(\tau^{*2}_1, \ldots , \tau^{*2}_q)$, and assuming that a small number of $\tau^{*2}_j$'s are nonzero, i.e., cardinality of $T \equiv \{j: \tau^{*2}_j \neq 0\}$ is small, less than $n$. Then, instead of screening for fixed effects in Step 1, we can screen for both fixed and random effects by incorporating a double penalization scheme as in \cite{wang2010}.  This way, in Step 2, both $|\hat S|$ and $|\hat T|$ are small, and we can apply Henderson's method III as before.

A few other details should also be discussed for completeness. First, multiple testing can be handled using the Westfall-Young procedure of  \cite{buhlmann2013}.  This multiple testing adjustment, which strongly controls the family-wise error rate, can directly be used in conjunction with our method for generating $p$-values for the individual hypothesis tests.
Second, the ridge-based framework of \cite{buhlmann2013}, which is a basis for our method, is known to \textit{not} have optimal power. \cite{buhlmann2013} shows that the detection rate may be larger than $N^{-1/2}$, whereas, under certain conditions, the detection limit for the  de-biased lasso approach of \cite{zhang2014} is in the $N^{-1/2}$ range. A possible extension of our work is to build a lasso-based inferential framework for high-dimensional linear mixed effect models. In fact, as suggested in the Introduction, our methods can be adapted to other high-dimensional estimators; and ridge is just an example. From \cite{vdg2014}, we can obtain asymptotically optimal inference for linear fixed effect models---i.e., for $\mathbf y = \mathbf X\beta^* + \epsilon$ with $N$ observations and $\epsilon_i$ i.i.d $N\left(0, \sigma^{*2}\right)$---by leveraging the fact that the Lasso estimator with non-negative penalty parameter $\lambda$, $\hat \beta(\lambda)$, can be rewritten as 
\[
\hat\beta(\lambda) - \beta^* + \lambda\hat{\mathbf \Theta} \hat \iota  =  \lambda\hat{\mathbf \Theta}\mathbf X^T\epsilon/N - \Delta/\sqrt{N}, \,\, \text{ where } \,\, 
\Delta := \sqrt{N}(\hat{\mathbf \Theta} \hat{\mathbf \Sigma} - \mathbf I_{p \times p})(\hat \beta(\lambda) - \beta^*)
\]
by inverting the KKT conditions, with $\hat \iota$ arising from the
subdifferential of $\|\beta\|_1$. Taking 
$\hat{\mathbf \Theta}$ to be a reasonably good approximation of an
inverse of $\hat{\mathbf \Sigma}$, the $\Delta$ term becomes
asymptotically negligible, and we can use the normality of $\epsilon$
to develop asymptotically valid tests and confidence intervals for
$\beta^*$.  (The scaled lasso furnishes a consistent estimator of
$\sigma^{*2}$.)  Extending this approach to the linear mixed-effect
setup (per Section \ref{sec: model}) requires meeting the challenge
that the $\epsilon_i$ are no longer i.i.d., which could be addressed using the methods of Section \ref{sec: consistent}.

\appendix

\section{Appendix}

\subsection{Proof of Results in Section~\ref{sec: method}}

\subsubsection{Proof of Lemma~\ref{lem:diagonal}}
It is straightforward to show that $\mathbf \Omega^*$ can be lower bounded as
\begin{equation*}
\mathbf \Omega^* \geq  c (\hat{\mathbf \Sigma} + \lambda
  \mathbf I_{p \times p})^{-1} \hat{\mathbf \Sigma} (\hat{\mathbf
    \Sigma} + \lambda \mathbf I_{p \times p})^{-1} \equiv \mathbf{\tilde
    \Omega}^*,
\end{equation*}
for some $c$ satisfying $0 < c <  \nu_{\min}\left(\mathbf V(\sigma^{*2},
\tau^{*2})\right)$. Since $\sigma^{*2}$ is positive, $\nu_{\min}\left(\mathbf V(\sigma^{*2}, \tau^{*2})\right) > 0$. Note that $\mathbf{\tilde \Omega}^*$ can 
alternatively be written as 
\begin{equation*}
\mathbf{\tilde \Omega}^* = \mathbf \Gamma ~\mbox{diag}\left(\frac{s_1^2}{(s_1^2 + \lambda)^2}, \ldots, \frac{s_N^2}{(s_N^2 + \lambda)^2} \right) \mathbf \Gamma^T,
\end{equation*}
which, in turn, implies that
\begin{equation*}
\tilde{\omega}^*_{\min} = \underset{j \in \{1, \dots , p \}}\min ~ \sum_{k = 1}^N \frac{s_k^2}{(s_k^2 + \lambda)^2}\mathbf \Gamma_{jk}^2,
\end{equation*}
and the claim follows.

\subsubsection{Proof of Proposition \ref{prop: ridge_theta}}

This was proven in \cite{shao2012} (see proof of their Theorem 1). Define $\mathbf \Gamma = \begin{bmatrix} \mathbf \Gamma' & (\mathbf \Gamma)_{\perp} \end{bmatrix}$; $\mathbf \Gamma'$ is orthogonal, i.e., $\mathbf \Gamma'^T\mathbf \Gamma' = \mathbf \Gamma' \mathbf \Gamma'^T = \mathbf I_{p \times p}$ . By definition (\ref{eq: ridge}), we have
\begin{align*}
\mathbb E[\hat \beta] - \theta^* 
&= \frac{1}{N}(\mathbf{\hat \Sigma} + \lambda \mathbf I_{p \times p})^{-1} \mathbf X^T\mathbf X \theta^* - \theta* \\
&= -(\lambda^{-1}N^{-1}\mathbf X^T \mathbf X + \mathbf I_{p \times p})^{-1} \theta^* \\
&= -\mathbf \Gamma'(\lambda^{-1}N^{-1}\mathbf \Gamma'^T\mathbf X^T\mathbf X \mathbf \Gamma' + \mathbf I_{p \times p})^{-1}\mathbf \Gamma'^T\mathbf \Gamma \mathbf \Gamma^T\theta^* \\
&= -\mathbf \Gamma(\lambda^{-1}N^{-1}\mathbf D^2 + \mathbf I_{R\times R})^{-1}\mathbf \Gamma^T \theta^*.
\end{align*}
Observing that the diagonal entries to $\mathbf D$ are positive, one obtains
\begin{equation}
(\lambda^{-1} N^{-1} \mathbf D^2 + \mathbf I_{R \times R})^{-1} \preceq \frac{\lambda^{-1}/\nu_{\min, +}(\hat{\mathbf \Sigma})}{1 + \lambda^{-1}/\nu_{\min, +}(\hat{\mathbf \Sigma})}\mathbf I_{R \times R},
\end{equation}
which, combined with the fact that $\mathbf \Gamma^T\mathbf \Gamma = \mathbf I_{R \times R}$, we obtain
\begin{equation*}
\underset{j \in \{1, \ldots, p\}}{\max}|\mathbb E[\hat \beta_j] - \theta_j^*| \le \lambda \|\theta^*\|_2  \nu_{\min, +}(\hat{\mathbf \Sigma})^{-1},
\end{equation*}
as desired. The bound on the variance follows directly from (\ref{eq: omega}). 

\subsection{Proof of Theorems \ref{thm: pvalue}, \ref{thm: main_var} and \ref{thm: ols}}

We first establish Theorem~\ref{thm: pvalue}, which follows directly from Proposition \ref{prop: ridge_theta}. 

\subsubsection{Proof of Theorem~\ref{thm: pvalue}}
It follows from Proposition~\ref{prop: ridge_theta} that
\begin{align*}
\max_j ~\kappa_j |\delta_j| &= \max_j ~\kappa_j \left|\mathbb E[\hat \beta_j] - \theta^*_j \right| 
\;\le\; \frac{\lambda \|\theta^*\|_2 \nu_{\min, +}\left(\hat{\mathbf \Sigma}\right)^{-1}}{N^{-1/2} \omega_{jj}^{*1/2}} 
\;\le\; \frac{\lambda \|\theta^*\|_2 \nu_{\min, +}\left(\hat{\mathbf \Sigma}\right)^{-1}}{N^{-1/2} \omega_{\min}^{*1/2}},
\end{align*}  
which, due to our choice of the ridge penalty parameter $\lambda > 0$
in (\ref{eq: lambda_ridge_choice}), is $o(1)$ as $N, p \to
\infty$. The claim now follows from (\ref{eq: dist}) and the assumption given by (\ref{eq: c_j}). 

Because there is an overlap in the lemmas used to prove Theorems \ref{thm: main_var} and \ref{thm: ols}, we present them together. Define $u^* = \|\mathbf X^T(y - \mathbf X\beta^*)\|_\infty/N$. 

\begin{lemma}\label{lem: conc}
  Let
\begin{equation}\label{eq: lasso_lambda}
\lambda_L = \frac{(\xi + 1)}{(\xi-1)}\sqrt{\frac{2(\sigma^{*2} + \tau^{*2} qn)(\log(p) - \log(\varepsilon/2))}{N}}.
\end{equation}
Under the model given by (\ref{eq: margin}), the event $u^* \le \lambda_L(\xi - 1)/(\xi + 1)$ occurs with probability greater than $1 - \varepsilon$. 
\end{lemma}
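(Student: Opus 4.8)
The plan is to exploit the marginal Gaussianity of the residual $r := y - \mathbf X\beta^*$ and reduce the claim to a maximal Gaussian tail bound. Under the marginal model (\ref{eq: margin}) with $\mathbf \Psi^* = \tau^{*2}\mathbf I_{q\times q}$, we have $r \sim \mathcal N(\mathbf 0, \mathbf V)$ with $\mathbf V = \mathbf V(\sigma^{*2}, \tau^{*2}) = \sigma^{*2}\mathbf I_{N\times N} + \tau^{*2}\mathbf Z\mathbf Z^T$. Consequently each coordinate $x_j^T r$ of $\mathbf X^T r$ is a centered univariate Gaussian with variance $\sigma_j^2 = x_j^T\mathbf V x_j$, and $u^* = \max_{1\le j\le p}|x_j^T r|/N$. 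I would first record a uniform upper bound on these variances, then apply a Gaussian tail bound to each $x_j^T r$, union bound over $j$, and finally verify that the stated $\lambda_L$ makes the resulting bound equal to $\varepsilon$.

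The central computation is the variance bound. Writing $\sigma_j^2 = \sigma^{*2}\|x_j\|_2^2 + \tau^{*2}\|\mathbf Z^T x_j\|_2^2$, the standardization $\|x_j\|_2^2 = N$ handles the first term, and for the second I would use $\|\mathbf Z^T x_j\|_2^2 \le \vertiii{\mathbf Z}_2^2\,\|x_j\|_2^2 = \vertiii{\mathbf Z}_2^2\,N$. The key observation is that the block-diagonal structure $\mathbf Z = \mbox{diag}(\mathbf Z_1, \ldots, \mathbf Z_M)$ gives $\vertiii{\mathbf Z}_2 = \max_m \vertiii{\mathbf Z_m}_2$, and within each block the $q$ columns have squared norm $n$ (by the standardization $\|z_j\|_2^2 = n$), so $\vertiii{\mathbf Z_m}_2^2 \le \|\mathbf Z_m\|_F^2 = qn$. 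Hence $\vertiii{\mathbf Z}_2^2 \le qn$ and $\sigma_j^2 \le (\sigma^{*2} + \tau^{*2}qn)N =: \bar v^2 N$ uniformly in $j$. This is the only step that genuinely uses the model structure, and it is where I expect the only real care is needed — specifically, in the block-diagonal/standardization bookkeeping, so that the factor $qn$ (rather than a larger quantity such as $qMn$ that a crude Frobenius bound on the full $\mathbf Z$ would yield) emerges.

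With the variance bound in hand, the remainder is routine. For each $j$, the Gaussian tail bound gives $\mathbb P[|x_j^T r| > t] \le 2\exp(-t^2/(2\bar v^2 N))$, and a union bound yields
\begin{equation*}
\mathbb P\left[u^* > t/N\right] = \mathbb P\left[\max_{1 \le j \le p} |x_j^T r| > t\right] \le 2p\exp\left(-\frac{t^2}{2\bar v^2 N}\right).
\end{equation*}
Setting $t = N\lambda_L(\xi-1)/(\xi+1) = \sqrt{2\bar v^2 N(\log p - \log(\varepsilon/2))}$, which is exactly the threshold dictated by the definition (\ref{eq: lasso_lambda}), gives $t^2/(2\bar v^2 N) = \log(2p/\varepsilon)$, so the right-hand side equals $2p\cdot(\varepsilon/2p) = \varepsilon$. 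Therefore $\mathbb P[u^* \le \lambda_L(\xi-1)/(\xi+1)] \ge 1 - \varepsilon$, as claimed; the inequality is in fact strict whenever at least one $\sigma_j^2$ falls strictly below $\bar v^2 N$.
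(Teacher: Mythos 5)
Your proposal is correct and follows essentially the same route as the paper's proof: per-coordinate Gaussianity of $x_j^T(y-\mathbf X\beta^*)$, a uniform variance bound of the form $x_j^T\mathbf V x_j \le (\sigma^{*2}+\tau^{*2}qn)N$ obtained from the column standardization and block-diagonal structure of $\mathbf Z$ (the paper phrases it as $\nu_{\max}(\mathbf V) \le \sigma^{*2}+\tau^{*2}qn$, which is the same fact as your $\vertiii{\mathbf Z}_2^2 \le qn$), then a Gaussian tail bound and a union bound over $j = 1,\ldots,p$. Your write-up is in fact slightly more explicit than the paper's in deriving the $qn$ factor, but the argument is the same.
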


\begin{proof}
Define $u_j = x_j^T(y - \mathbf X\beta^*)/N$. Then $u^* = \max_j |u_j|$. Under model (\ref{eq: margin}), we observe that,
\begin{align*}
u_j \sim N(0, x_j^T\mathbf V(\sigma^{*2}, \tau^{*2}) x_j)
\end{align*}
It follows from the Gaussianity of $u_j$ (in fact, sub-Gaussianity would suffice) that
\begin{align}\label{eq: conc}
\mathbb P[|u_j| > \lambda_L(\xi - 1)/(\xi  + 1)] &\le 2e^{-\frac{\lambda_L^2(\xi - 1)^2/(\xi  + 1)^2}{2x_j^T\mathbf V(\sigma^{*2}, \tau^{*2}) x_j}} \le 2e^{-\frac{\lambda_L^2(\xi - 1)^2/(\xi  + 1)^2}{2N\nu_{\max}(\mathbf V(\sigma^{*2}, \tau^{*2}))}} \le \frac{\varepsilon}{p}. 
\end{align} 
The second inequality follows from the fact that the columns of $\mathbf X$ are standardized such that $\|x_j\|_2^2 = N  ~ \forall j$. For the third inequality, recall that the columns of $\mathbf Z$ are standardized such that $\|z_j\|_2^2 = n  ~ \forall j$, which implies that the largest eigenvalue of $\mathbf V(\sigma^{*2}, \tau^{*2})$, the true covariance of $y$, satisfies $\nu_{\max}(\mathbf V(\sigma^*, \tau^*)) \le \sigma^{*2} + \tau^{*2}qn$. The third inequality in (\ref{eq: conc}) is obtained by plugging in our choice of $\lambda_L$ (\ref{eq: lambdaL}). Employing a union bound, we then have
\begin{align*}
\mathbb P[u^* \leq \lambda_L(\xi - 1)/(\xi  + 1)] \ge 1-\sum_{j=1}^p  \mathbb P[|u_j| > \lambda_L(\xi - 1)/(\xi  + 1)] \geq 1 - \varepsilon,
\end{align*}
This is our desired result. 
\end{proof}

\subsubsection{Proof of Lemma \ref{lem: shat}}
We use arguments similar to those employed in the proof of Theorem 3 in \cite{ye2010}. Suppose that $u^* \le \lambda_L$. Define $h = \hat\beta^L - \beta^*$. The Karuhn-Kush-Tucker (KKT) optimality conditions for Lasso is given by 
\begin{equation*}
\begin{cases}
\frac{x_j^T(y - \mathbf X\hat \beta^L)}{N}  = \lambda_L\mbox{sign}(\hat \beta^L_j), & \hat \beta^L_j \neq 0 ,\\
\frac{x_j^T(y - \mathbf X\hat \beta^L)}{N}  \in \lambda_L[-1, +1], & \hat \beta^L_j = 0.
\end{cases}
\end{equation*}
With some rearrangement, the KKT conditions can be rewritten as
\begin{equation}\label{eq: kkt}
\frac{\mathbf X^T(y - \mathbf X\beta^*) - \hat{\mathbf \Sigma}h}{N} = \lambda_L\hat \iota
\end{equation}
with $\hat \iota \in \mathbb R^p$ and $\iota_j = \mbox{sign}(\hat \beta^L_j)$ if $j \in \hat S$ and $\iota_j \in [-1, +1]$ otherwise: the subdifferential which arises from $\|\beta\|_1$. Rearranging (\ref{eq: kkt}) and observing that $\mbox{sign}(\hat \beta^L_j) = \mbox{sign}(h_j)$ for $j \notin S$ yields 
\begin{align*}
h'^T\hat{\mathbf \Sigma}h \le (u^* + \lambda_L)\|h'_S\|_1 + (u^* - \lambda_L)\|h'_{S^c}\|_1 
\end{align*}
for all vectors $h'$ with $\mbox{sign}(h'_{S^c}) = \mbox{sign}(h_{S^c})$. If we take $h' = h$, one can see that $h \in \mathcal C(\xi, S)$:
\begin{align*}
& 0 \le h^T\hat{\mathbf \Sigma}h \le (u^* + \lambda_L)\|h'_S\|_1 + (u^* - \lambda_L)\|h'_{S^c}\|_1 \\
&\implies \|h'_{S^c}\|_1 \le \frac{(u^* + \lambda_L)}{(\lambda_L - u^*)} \|h'_S\|_1 \le \xi \|h'_S\|_1.
\end{align*}
On the other hand, setting $h'$ to be any vector so that for some $j \in S^c$, $h'_j = h_j$ and $0$ elsewhere gives
\begin{equation*}
h_j \hat{\mathbf \Sigma}_{j, \cdot} h \le (u^* - \lambda_L)|h_j| \le 0,
\end{equation*}
which implies that $h \in \mathcal C_-(\xi, S)$. The KKT conditions (\ref{eq: kkt}) also tell us that 
\begin{equation*}
\|\hat{\mathbf \Sigma}h\|_\infty \le u^* + \lambda_L,
\end{equation*}
which, when combined with the definition of $\zeta$ (\ref{eq: zeta}) yields 
\begin{equation*}
\|h\|_\infty \le \frac{u^* + \lambda_L}{\zeta},
\end{equation*}
which is the desired result.
In the event that $u^* \le \lambda_L(\xi - 1)/(\xi + 1)$, we have
\begin{equation*}
  \|h\|_\infty \le \frac{2\lambda_L}{\zeta (\xi + 1)}.
\end{equation*}

\subsubsection{Proof of Lemma \ref{lem: fp}}
The proof is adapted from that of Theorem 3 in \cite{sun2012}. By construction, $\hat \beta^L$ satisfies the KKT conditions from (\ref{eq: kkt}) 
which implies that 
\begin{align*}
\frac{|x_j^TX(\hat \beta^L - \beta^*)|}{N} &= \frac{|x_j^T(y - \mathbf X\hat\beta^L - \epsilon)|}{N} \\
&\geq \frac{|x_j^T(y - \mathbf X\hat\beta^L - \epsilon)|}{N} - \frac{|x_j^T\epsilon|}{N} \\
&\geq \lambda_L - u^*.
\end{align*}
For $\mathcal A \subseteq \hat S \backslash S$, such that $|\mathcal A | \le N'$, the previous inequality implies
\begin{align}
(\lambda_L - u^*)^2|\mathcal A| &\le \frac{\sum_{j \in A} |x_j^T\mathbf X(\hat \beta^L - \beta^*)|^2}{N^2} \nonumber\\
&=  \frac{\sum_{j \in A} (\mathbf Xh)^T x_jx_j^T(\mathbf X h)}{N^2} \le \frac{\kappa_+(N', S)\|\mathbf Xh \|_2^2}{N}. \label{eq: fp}
\end{align}
Going back to the KKT conditions in (\ref{eq: kkt}), we have, for arbitrary $h' \in \mathbb R^p$,
\[
\frac{(\mathbf X\hat\beta^L - \mathbf Xh')^T\mathbf Xh}{N} \le \lambda_L(\|h'\|_1 - \|\hat\beta^L\|_1) + u^*\|h' - \hat\beta^L\|_1,
\]
which, when combined with the fact that 
\[ 
2(\mathbf X\hat\beta^L - \mathbf Xh')^T\mathbf Xh = \|\mathbf X \hat \beta^L - \mathbf Xh'\|_2^2 + \|\mathbf Xh\|_2^2 - \|\mathbf X\beta^* - \mathbf Xh'\|_2^2
\]
gives the inequality 
\begin{align}
\frac{\|\mathbf Xh\|_2^2}{N} &\le \lambda_L(\|\beta^*\|_1 - \|\hat \beta^L\|_1) + u^*\|h\|_1 \nonumber\\
&\le (\lambda_L + u^*)\|h_S\|_1.
\end{align}
Thus, $h$ lies in the cone in (\ref{eq: cone}) in the event that $u^* \le \lambda_L(\xi - 1)/(\xi + 1)$ (by noting that the left-hand side is lower bounded by 0). By definition of $\kappa(\xi, S)$ from (\ref{eq: compat}),
\begin{align*}
\frac{\|\mathbf Xh\|_2^2}{N} \le \frac{(\lambda_L + u^*)^2d}{\kappa^2(\xi, S)},
\end{align*}
which, when combined with (\ref{eq: fp}) implies
\begin{equation*}
|\mathcal A| \le \frac{\kappa_+(N', S)\xi^2d}{\kappa^2(\xi, S)} < N',
\end{equation*}
by Assumption \ref{a: nprime}. 

\subsubsection{Proof of Theorem \ref{thm: main_var}}
Suppose that $u^* \le \lambda_L(\xi - 1)/(\xi + 1)$. Then by Lemmas \ref{lem: shat} and \ref{lem: fp} and the referenced assumptions within, we have 
\begin{align}
\| \hat\beta^L - \beta^*\|_\infty &\le \frac{2\xi\lambda_L}{(\xi+1)\zeta} \implies |\beta^*_j | \le  \frac{4\xi\lambda_L}{(\xi+1)\zeta} \quad \mbox{for all $j \in S \backslash \hat S$}, \\\
|\hat S \backslash S| &\le N' \implies |\hat S| \le N' + d \lesssim d.
\end{align}
Denote $\mathbf{\hat X} = \begin{bmatrix} \mathbf X_{\hat S} & \mathbf Z \end{bmatrix}$. Under candidate model (\ref{eq: wrong}), our variance component estimators (via Henderson's Method III) are given by
\begin{align*}
\hat \sigma^2 &= \frac{y^T\left(\mathbf I_{N \times N} - \mathbf P_{\hat{\mathbf X}}\right)y}{N - \mbox{rank}(\mathbf{\hat X})}, \\
\hat \tau^2 &= \frac{y^T\left( \mathbf P_{\hat{\mathbf X}} - \mathbf P_{\mathbf X_{\hat S}} \right)y - \hat \sigma^2[\mbox{rank}(\mathbf{\hat X}) - \mbox{rank}(\mathbf X_{\hat S})]}{\mbox{tr}\left[\mathbf Z^T\left(\mathbf I_{N \times N} - \mathbf P_{\mathbf X_{\hat S}}  \right)\mathbf Z \right]}.
\end{align*}
See (\ref{eq: sigma_hat}) and (\ref{eq: tau_hat}).

Consider the more interesting scenario where $|S \backslash \hat S| > 0$. If $S$ is contained within $\hat S$, then it is straightforward to show that variance component estimators are consistent (the true model is a sub-model of the proposed one). We first prove, under the given assumptions, that $|\hat \sigma^2 - \sigma^{*2}| = o_P(1)$. Write $S_O = S \backslash \hat S$, '$O$' for omitted. Then, 
\begin{align}
\hat \sigma^2 &= \frac{(\mathbf{\hat X}\beta^*_{\hat S} + \mathbf X_{S^O}\beta^*_{S^O} + \epsilon)^T(\mathbf I_{N \times N} - \mathbf P_{\hat{\mathbf X}})(\mathbf{\hat X}\beta_{\hat S} + \mathbf X_{S^O}\beta^*_{S^O} + \epsilon)}{N - \mbox{rank}(\mathbf{\hat X})} \nonumber \\
&= \frac{\beta_{S^O}^{*T}\mathbf X_{S^O}^T(\mathbf I_{N \times N} - \mathbf P_{\hat{\mathbf X}})\mathbf X_{S^O}\beta^*_{S^O}}{N - \mbox{rank}(\mathbf{\hat X})} + \frac{2\beta_{S^O}^{*T}\mathbf X_{S^O}^T(\mathbf I_{N \times N} - \mathbf P_{\hat{\mathbf X}})\epsilon}{N - \mbox{rank}(\mathbf{\hat X})} +  \frac{\epsilon^T(\mathbf I_{N \times N} - \mathbf P_{\hat{\mathbf X}})\epsilon}{N - \mbox{rank}(\mathbf{\hat X})}.  \label{eq: sigma_break}
\end{align}
We proceed to show that the three parts to (\ref{eq: sigma_break}) satisfy
\begin{align}
&\left|\frac{\epsilon^T(\mathbf I_{N \times N} - \mathbf P_{\hat{\mathbf X}})\epsilon}{N - \mbox{rank}(\mathbf{\hat X})} -\sigma^{*2}\right| = o_P(1), \label{eq: sigma_3} \\
& \left|\frac{2\beta_{S^O}^{*T}\mathbf X_{S^O}^T(\mathbf I_{N \times N} - \mathbf P_{\hat{\mathbf X}})\epsilon}{N - \mbox{rank}(\mathbf{\hat X})}\right| = o_P(1), \label{eq: sigma_2} \\
\mbox{and} \quad & \frac{\beta_{S^O}^{*T}\mathbf X_{S^O}^T(\mathbf I_{N \times N} - \mathbf P_{\hat{\mathbf X}})\mathbf X_{S^O}\beta^*_{S^O}}{N - \mbox{rank}(\mathbf{\hat X})} = o(1), \label{eq: sigma_1} 
\end{align}
which would suggest that $\hat \sigma^2$ is indeed consistent for
$\sigma^{*2}$.   Note that the term in (\ref{eq: sigma_1}) is $\mbox{Bias}(\hat \sigma^2) = \mathbb E[\hat \sigma^2] - \sigma^{*2}$.

\begin{enumerate}
\item \textit{Proving (\ref{eq: sigma_2})}: Let $\mathbf \Gamma_{\hat{\mathbf X}_{\perp}} \mathbf D_{\hat{\mathbf X}_{\perp}} \mathbf \Gamma_{\hat{\mathbf X}_{\perp}}^T$ represent the eigendecomposition of $\mathbf I_{N \times N} - \mathbf P_{\hat{\mathbf X}}$. We note that the latter is idempotent, implying that the diagonal matrix $\mathbf D_{\hat{\mathbf X}_{\perp}}$, which is of rank $N - \mbox{rank}(\hat{\mathbf X})$, has only $0$ and $1$s as its eigenvalues. It is straightforward to show that 
\begin{align*}
\mathbb E\left[\frac{2\beta_{S^O}^{*T}\mathbf X_{S^O}^T(\mathbf I_{N \times N} - \mathbf P_{\hat{\mathbf X}})\epsilon}{N - \mbox{rank}(\mathbf{\hat X})} \right] &= 0 \quad \mbox{and} \\
 \mbox{Var}\left[\frac{2\beta_{S^O}^{*T}\mathbf X_{S^O}^T(\mathbf I_{N \times N} - \mathbf P_{\hat{\mathbf X}})\epsilon}{N - \mbox{rank}(\mathbf{\hat X})} \right] &= \frac{4\sigma^{*2}\beta_{S^O}^{*T}\mathbf X_{S^O}^T\Gamma_{\hat{\mathbf X}_{\perp}} \mathbf D_{\hat{\mathbf X}_{\perp}} \mathbf \Gamma_{\hat{\mathbf X}_{\perp}}^T\mathbf X_{S^O}\beta_{S^O}^*}{[N - \mbox{rank}(\mathbf{\hat X})]^2} \\
&\le \frac{4\sigma^{*2}\| \Gamma_{\hat{\mathbf X}_{\perp}}^T\mathbf X_{S^O}\beta_{S^O}^* \|_\infty^2}{N - \mbox{rank}(\mathbf{\hat X})} \\
&\precsim \frac{4\sigma^{*2}\| \Gamma_{\hat{\mathbf X}_{\perp}}^T\mathbf X_{S^O}\|_\infty^2}{N - \mbox{rank}(\mathbf{\hat X})} \times \frac{d^2q\log p}{M} = o(1).
\end{align*}
Statement (\ref{eq: sigma_2}) then follows from Chebyshev's inequality. 
\item \textit{Proving (\ref{eq: sigma_3})}: Let $\chi^2_i(1)$, $i = 1,
  \ldots, N - \mbox{rank}(\mathbf{\hat X})$, be i.i.d~random variables
  following a $\chi^2$ distribution with 1 degree of freedom. Observe that,
\begin{align}
\frac{\epsilon^T[\mathbf I_{N \times N} -  \mathbf P_{\mathbf{\hat
  X}}]\epsilon}{N - \mbox{rank}(\mathbf{\hat X})} =_d \frac{\epsilon^T
  \mathbf D_{\hat{\mathbf X}_{\perp}}  \epsilon}{N -
  \mbox{rank}(\mathbf{\hat X})} &=_d \frac{\sigma^{*2} \sum_{i=1}^{N -
                                  \mbox{rank}(\mathbf{\hat X})}
                                  \chi^2_i(1)}{N -
                                  \mbox{rank}(\mathbf{\hat X})}
\end{align}
and
\begin{align}
\left|\frac{\sigma^{*2} \sum_{i=1}^{N - \mbox{rank}(\mathbf{\hat X})} \chi^2_i(1)}{N - \mbox{rank}(\mathbf{\hat X})} - \sigma^{*2}\right| &= o_P(1). \label{eq: sigma_slln}
 \end{align}
Here, (\ref{eq: sigma_slln})  follows from Lemma \ref{lem: fp} and the fact that $N \succsim N' + d + qM$ , which implies that $N - \mbox{rank}(\mathbf{\hat X}) \to \infty$ as $M \to \infty$. Applying the Strong Law of Large Numbers (SLLN) for i.i.d.~random variables, we arrive at (\ref{eq: sigma_3}). 
\item \textit{Proving (\ref{eq: sigma_1})}: We observe that
\begin{align*}
|\mbox{Bias}(\hat \sigma^2)|
&\le \| \mathbf \Gamma_{\mathbf{\hat X}_{\perp}}^T\mathbf X_{S_O} \|_\infty^2  \times \|\beta^*_{S_O} \|_\infty^2 \times d^2 \\
&\lesssim \frac{d^2 q \log(p)}{M} = o(1),
\end{align*}
and we have completed our proof that $\hat \sigma^2$ is consistent under the stated assumptions. 
\end{enumerate}
We now demonstrate that the same claim holds for $\hat \tau^2$. Expanding out $y$, we obtain, after some algebraic manipulation,
\begin{align}
\hat \tau^2 &= \frac{\beta_{S_O}^{*T}\mathbf X_{S_O}^T( \mathbf P_{\hat{\mathbf X}} - \mathbf P_{\mathbf X_{\hat S}}) \mathbf X_{S_O}\beta_{S_O}^*}{\mbox{tr}\left(\mathbf Z'^T\mathbf Z'\right)} + \frac{2\beta_{S_O}^{*T}\mathbf X_{S_O}^T( \mathbf P_{\hat{\mathbf X}} - \mathbf P_{\mathbf X_{\hat S}})\epsilon}{\mbox{tr}\left(\mathbf Z'^T\mathbf Z' \right)} + \frac{\epsilon^T( \mathbf P_{\hat{\mathbf X}} - \mathbf P_{\mathbf X_{\hat S}})\epsilon}{\mbox{tr}\left(\mathbf Z'^T\mathbf Z'\right)} \nonumber \\
&+ \frac{2\beta_{S_O}^{*T}\mathbf X_{S_O}^T(\mathbf I_{N \times N} - \mathbf P_{\mathbf X_{\hat S}})\mathbf Zv}{\mbox{tr}\left(\mathbf Z'^T\mathbf Z' \right)} + \frac{v^T\mathbf Z'^{T}\mathbf Z' v}{\mbox{tr}\left(\mathbf Z'^T\mathbf Z' \right)} +  \frac{2v^T\mathbf Z^T(\mathbf I_{N \times N} - \mathbf P_{\mathbf X_{\hat S}})\epsilon}{\mbox{tr}\left(\mathbf Z'^T\mathbf Z'\right)}\nonumber \\
& - \frac{\sigma^{*2}[\mbox{rank}(\hat{\mathbf X}) - \mbox{rank}(\mathbf X_{\hat S})]}{\mbox{tr}\left(\mathbf Z'^T\mathbf Z'\right)} - \frac{\mbox{Bias}(\hat \sigma^2)[\mbox{rank}(\hat{\mathbf X}) - \mbox{rank}(\mathbf X_{\hat S})]}{\mbox{tr}\left(\mathbf Z'^T\mathbf Z'\right)}, \label{eq: tau_parts}
\end{align}
where we have defined $\mathbf Z' = (\mathbf I_{N \times N} - \mathbf P_{\mathbf X_{\hat S}})\mathbf Z$.  We set out to prove that the terms in (\ref{eq: tau_parts}) satisfy 
\begin{align}
&\left|\frac{2\beta_{S_O}^{*T}\mathbf X_{S_O}^T( \mathbf P_{\hat{\mathbf X}} - \mathbf P_{\mathbf X_{\hat S}})\epsilon}{\mbox{tr}\left(\mathbf Z'^T\mathbf Z' \right)}\right| = o_P(1), \label{eq: tau_1} \\
&\left|\frac{\epsilon^T( \mathbf P_{\hat{\mathbf X}} - \mathbf P_{\mathbf X_{\hat S}})\epsilon}{\mbox{tr}\left(\mathbf Z'^T\mathbf Z' \right)}  - \frac{\sigma^{*2}[\mbox{rank}(\hat{\mathbf X}) - \mbox{rank}(\mathbf X_{\hat S})]}{\mbox{tr}\left[\mathbf Z'^T\mathbf Z'\right]}\right| = o_P(1), \label{eq: tau_2} \\
&\left|\frac{2\beta_{S_O}^{*T}\mathbf X_{S_O}^T(\mathbf I_{N \times N} - \mathbf P_{\mathbf X_{\hat S}})\mathbf Zv}{\mbox{tr}\left(\mathbf Z'^T\mathbf Z' \right)}\right| = o_P(1), \label{eq: tau_3}\\
&\left|\frac{v^T\mathbf Z^T(\mathbf I_{N \times N} - \mathbf P_{\mathbf X_{\hat S}})\mathbf Zv}{\mbox{tr}\left(\mathbf Z'^T\mathbf Z' \right)}  - \tau^{*2}\right| = o_P(1),  \label{eq: tau_4}\\
 &\left|\frac{2v^T\mathbf Z^T(\mathbf I_{N \times N} - \mathbf
   P_{\mathbf X_{\hat S}})\epsilon}{\mbox{tr}\left(\mathbf Z'^T\mathbf
   Z'\right)}\right| = o_P(1), \label{eq: tau_5}
\end{align}
and
\begin{align}
   & \frac{\beta_{S_O}^{*T}\mathbf X_{S_O}^T( \mathbf P_{\hat{\mathbf X}} - \mathbf P_{\mathbf X_{\hat S}}) \mathbf X_{S_O}\beta_{S_O}^*}{\mbox{tr}\left(\mathbf Z'^T\mathbf Z'\right)} - \frac{\mbox{Bias}(\hat \sigma^2)[\mbox{rank}(\hat{\mathbf X}) - \mbox{rank}(\mathbf X_{\hat S})]}{\mbox{tr}\left(\mathbf Z'^T\mathbf Z'\right)} = o(1). \label{eq: tau_bias}
\end{align}
Let $\mathbf Q_{\mathbf Z'} \mathbf D_{\mathbf Z'} \mathbf \Gamma_{\mathbf Z'}^T$ represent the singular value decomposition of $\mathbf Z'$, with $\mathbf Q_{\mathbf Z'}$ and $\mathbf \Gamma_{\mathbf Z'}$ of dimensions $N \times qM$ and $qM \times qM$, respectively. Additionally, write the eigendecompositions of $\mathbf P_{\hat{\mathbf X}} - \mathbf P_{\mathbf X_{\hat S}}$ and $\mathbf I_{N \times N} - \mathbf P_{\mathbf X_{\hat S}}$ as $\mathbf \Gamma_{\hat{\mathbf X} \perp \mathbf X_{\hat S}}  \mathbf D_{\hat{\mathbf X} \perp \mathbf X_{\hat S}} \mathbf \Gamma_{\hat{\mathbf X} \perp \mathbf X_{\hat S}}^T$ and $\mathbf \Gamma_{\mathbf X_{\hat S} \perp} \mathbf D_{\mathbf X_{\hat S} \perp} \mathbf \Gamma_{\mathbf X_{\hat S} \perp}^T$, respectively. Note that (\ref{eq: tau_4}) and (\ref{eq: tau_bias}) make up $\mbox{Bias}[\hat \tau^2] = \mathbb E[\hat \tau^2] - \tau^{*2}$. 

To avoid repetition, some of the proofs are presented in abbreviated form. 
\begin{enumerate}
\item \textit{Proving (\ref{eq: tau_1})}: Clearly,
\begin{align*}
\mathbb E \left[\frac{2\beta_{S_O}^{*T}\mathbf X_{S_O}^T( \mathbf P_{\hat{\mathbf X}} - \mathbf P_{\mathbf X_{\hat S}})\epsilon}{\mbox{tr}\left(\mathbf Z'^T\mathbf Z' \right)}\right] &= 0 \\ 
\mbox{Var} \left[\frac{2\beta_{S_O}^{*T}\mathbf X_{S_O}^T( \mathbf P_{\hat{\mathbf X}} - \mathbf P_{\mathbf X_{\hat S}})\epsilon}{\mbox{tr}\left(\mathbf Z'^T\mathbf Z' \right)}\right]  &= \frac{4\beta_{S_O}^{*T}\mathbf X_{S_O}^T( \mathbf P_{\hat{\mathbf X}} - \mathbf P_{\mathbf X_{\hat S}})\mathbf X_{S_O}\beta_{S_O}^{*}}{\mbox{tr}\left[\mathbf Z'^T\mathbf Z' \right]^2} \\
&= \frac{4\beta_{S_O}^{*T}\mathbf X_{S_O}^T\mathbf \Gamma_{\hat{\mathbf X} \perp \mathbf X_{\hat S}}  \mathbf D_{\hat{\mathbf X} \perp \mathbf X_{\hat S}} \mathbf \Gamma_{\hat{\mathbf X} \perp \mathbf X_{\hat S}}^T\mathbf X_{S_O}\beta_{S_O}^{*}}{\mbox{tr}\left[\mathbf Z'^T\mathbf Z' \right]^2} \\
&\le \frac{4 \times d^2 \times qM \times \|\mathbf \Gamma_{\hat{\mathbf X} \perp \mathbf X_{\hat S}} \mathbf X_{S_O} \|_\infty \times \|\beta_{S_O}^{*}\|_\infty^2}{\mbox{tr}\left[\mathbf Z'^T\mathbf Z' \right]^2} \\
&= o(1),
\end{align*}
following from (\ref{eq: z_dist}) in Assumption \ref{a: z} and Assumption \ref{a: something}. 
\item \textit{Proving (\ref{eq: tau_2})}: Orthogonality of $\mathbf \Gamma_{\hat{\mathbf X} \perp \mathbf X_{\hat S}}$ implies that  $\epsilon^T( \mathbf P_{\hat{\mathbf X}} - \mathbf P_{\mathbf X_{\hat S}})\epsilon =_d \epsilon^T \mathbf D_{\hat{\mathbf X} \perp \mathbf X_{\hat S}} \epsilon$, so
\begin{equation*}
\mathbb E\left[\frac{\epsilon^T( \mathbf P_{\hat{\mathbf X}} - \mathbf P_{\mathbf X_{\hat S}})\epsilon}{\mbox{tr}\left(\mathbf Z'^T\mathbf Z' \right)}\right] = \mathbb E\left[\frac{\epsilon^T \mathbf D_{\hat{\mathbf X} \perp \mathbf X_{\hat S}} \epsilon}{\mbox{tr}\left(\mathbf Z'^T\mathbf Z' \right)}\right] =  \frac{\sigma^{*2}[\mbox{rank}(\hat{\mathbf X}) - \mbox{rank}(\mathbf X_{\hat S})]}{\mbox{tr}\left[\mathbf Z'^T\mathbf Z'\right]}  
\end{equation*}
and, using properties of quadratic forms, we have
\begin{align*}
\mbox{Var}\left[\frac{\epsilon^T( \mathbf P_{\hat{\mathbf X}} - \mathbf P_{\mathbf X_{\hat S}})\epsilon}{\mbox{tr}\left(\mathbf Z'^T\mathbf Z' \right)}\right] &= \mbox{Var}\left[\frac{\epsilon^T \mathbf D_{\hat{\mathbf X} \perp \mathbf X_{\hat S}} \epsilon}{\mbox{tr}\left(\mathbf Z'^T\mathbf Z' \right)}\right] \\
&= \frac{2\sigma^{*4}\mbox{rank}(\mathbf D_{\hat{\mathbf X} \perp  \mathbf X_{\hat S}})}{\mbox{tr}\left(\mathbf Z'^T\mathbf Z' \right)^2} \\
&= \frac{2\sigma^{*4}\mbox{rank}(\mathbf Z')}{\mbox{tr}\left(\mathbf Z'^T\mathbf Z' \right)^2} \lesssim \frac{1}{M} = o(1),
\end{align*}
the latter relation following from (\ref{eq: z_sing}) in Assumption \ref{a: z}. This proves (\ref{eq: tau_2}). 
\item \textit{Proving (\ref{eq: tau_3})}: Proof is similar to that of (\ref{eq: tau_1}), as we note that 
\begin{align*}
\mathbb E\left[\frac{2\beta_{S_O}^{*T}\mathbf X_{S_O}^T(\mathbf I_{N \times N} - \mathbf P_{\mathbf X_{\hat S}})\mathbf Zv}{\mbox{tr}\left(\mathbf Z'^T\mathbf Z' \right)}  \right] &= 0, \\
\mbox{Var}\left[\frac{2\beta_{S_O}^{*T}\mathbf X_{S_O}^T(\mathbf I_{N \times N} - \mathbf P_{\mathbf X_{\hat S}})\mathbf Zv}{\mbox{tr}\left(\mathbf Z'^T\mathbf Z' \right)}  \right] &= \frac{4\tau^{*2}\beta_{S_O}^{*T}\mathbf X_{S_O}^T\mathbf Q_{\mathbf Z'} \mathbf D_{\mathbf Z'}^2\mathbf Q_{\mathbf Z'}^T \mathbf X_{S_O}\beta_{S_O}^{*} }{\mbox{tr}\left(\mathbf Z'^T\mathbf Z' \right)^2} \\ 
&\le \frac{4\tau^{*2} \mbox{tr}(\mathbf Z'^{T} \mathbf Z) \|\mathbf Q_{\mathbf Z'}^T\mathbf X_{S_O}\beta_{S_O}^{*} \|_\infty^2}{\mbox{tr}\left(\mathbf Z'^T\mathbf Z' \right)^2} \\
&= o(1),
\end{align*}
having applied Assumptions \ref{a: z} and \ref{a: something} here. 
\item \textit{Proving (\ref{eq: tau_4})}: As for (\ref{eq: tau_2}), using again properties of quadratic forms, we can show that 
\begin{align*}
\mathbb E\left[\frac{v^T\mathbf Z^T(\mathbf I_{N \times N} - \mathbf P_{\mathbf X_{\hat S}})\mathbf Zv}{\mbox{tr}\left(\mathbf Z'^T\mathbf Z' \right)}  \right] &= \tau^{*2} ,\\
\mbox{Var}\left[\frac{v^T\mathbf Z^T(\mathbf I_{N \times N} - \mathbf P_{\mathbf X_{\hat S}})\mathbf Zv}{\mbox{tr}\left(\mathbf Z'^T\mathbf Z' \right)}  \right] &= \frac{2\tau^{*4}\mbox{tr}(\mathbf D_{\mathbf Z'}^4)}{\mbox{tr}(\mathbf D_{\mathbf Z'}^2)^2} = o(1),
\end{align*}
the last relation the result of (\ref{eq: z_sing}) from Assumption \ref{a: z}.
\item \textit{Proving (\ref{eq: tau_5})}: We can rewrite
\begin{align*}
\frac{2v^T\mathbf Z'^T\epsilon}{\mbox{tr}\left(\mathbf Z'^T\mathbf Z'\right)} =  \frac{2v^T\mathbf \Gamma_{\mathbf Z'} \mathbf D_{\mathbf Z'} \mathbf Q_{\mathbf Z'}^T\epsilon}{\mbox{tr}\left(\mathbf Z'^T\mathbf Z'\right)} =_d \frac{\sigma^* \tau^*\sum_{i=1}^{qM} s_i B_i}{\sum_{i=1}^{qM} s_i^2}.
\end{align*}
where $B_i$, $i = 1, \ldots \mbox{rank}(\mathbf Z')$ are random variables formed as the product of two independent $N(0, 1)$ random variables. Then,
\begin{equation*}
\mbox{Var}\left(\sum_{i=1}^{qM} s_i B_i\right) = \sum_{i=1}^{qM} s_i^2,
\end{equation*}
which implies that
\begin{equation*}
\mbox{Var}\left(\frac{\sigma^* \tau^*\sum_{i=1}^{qM} s_i B_i}{\sum_{i=1}^{qM} s_i^2} \right)  = \frac{1}{\sum_{i=1}^{qM} s_i^2}
\end{equation*}
and since $\sum_{i=1}^{qM} s_i^2 \asymp qM$ by (\ref{eq: z_dist}) in Assumption \ref{a: z}, we have proven our claim (\ref{eq: tau_5}). 
\item \textit{Proving (\ref{eq: tau_bias})}: By the definition of $\mbox{Bias}[\hat \tau^2]$, it is clear that
\begin{align*}
|\mbox{Bias}(\hat \tau^2)| &\le  \frac{\beta_{S_O}^{*T}\mathbf X_{S_O}^T( \mathbf P_{\hat{\mathbf X}} - \mathbf P_{\mathbf X_{\hat S}}) \mathbf X_{S_O}\beta_{S_O}^*}{\mbox{tr}\left( \mathbf Z'^T\mathbf Z' \right)} + \frac{|\mbox{Bias}(\hat \sigma^2)|[\mbox{rank}(\hat{\mathbf X}) - \mbox{rank}(\mathbf X_{\hat S})]}{\mbox{tr}\left( \mathbf Z'^T\mathbf Z'\right)} \\
&\le \frac{\beta_{S_O}^{*T}\mathbf X_{S_O}^T\mathbf P_{\mathbf Z} \mathbf X_{S_O}\beta_{S_O}^*}{\mbox{tr}\left( \mathbf Z'^T\mathbf Z'\right)} + \frac{qM \times |\mbox{Bias}(\hat \sigma^2)|}{\mbox{tr}\left( \mathbf Z'^T\mathbf Z'\right)} \\
&\le \frac{qM \times d \times \|\mathbf \Gamma_{\mathbf Z} \mathbf X_{S^O}\|_\infty^2 \times \|\beta_{S_O}^*\|_\infty^2}{\mbox{tr}\left( \mathbf Z'^T\mathbf Z'\right)} + \frac{qM \times \mbox{Bias}(\hat \sigma^2)}{\mbox{tr}\left( \mathbf Z'^T\mathbf Z'\right)} \\
&\lesssim \frac{d^2 q \log(p)}{M} = o(1),
\end{align*}
where the second last relation follows from the proven claim that $\mbox{Bias}(\hat \sigma^2)$  is $o(1)$, (\ref{eq: z_dist}) in Assumption \ref{a: z} and Assumption \ref{a: something}.
\end{enumerate}
Since the event $u^* \le \lambda_L(\xi - 1)/(\xi + 1)$ occurs with probability greater than $1 - 1/p \to 1$ as $p \to \infty$,
\begin{align*}
| \hat \sigma^2 - \sigma^{*2}| &= o_P(1), \\ 
| \hat \tau^2 - \tau^{*2}| &= o_P(1) \\ 
\end{align*}
as claimed.  

\subsubsection{Proof of Theorem \ref{thm: ols}}
Suppose that $u^* \le \lambda_L(\xi - 1)/(\xi + 1)$, and write $S_O = S \backslash \hat S$. The OLS fit $\hat \beta^{\init}$ has a simple closed-form expression:
\begin{align*}
\hat \beta^{\init}_{\hat S} &= (\mathbf X_{\hat S}^T \mathbf X_{\hat S})^{-1}\mathbf X_{\hat S}^Ty \\
&= \beta^*_{\hat S} + (\mathbf X_{\hat S}^T \mathbf X_{\hat S})^{-1}\mathbf X_{\hat S}^T\mathbf X_{S_O}\beta^*_{S_O} +  (\mathbf X_{\hat S}^T \mathbf X_{\hat S})^{-1}\mathbf X_{\hat S}^T(y - \mathbf X\beta^*).
\end{align*}
and $\hat \beta^{\init}_{\hat S^c} = 0$. 
Thus, by triangle inequality, 
\begin{equation}\label{eq: beta_init_ineq}
\|\hat \beta^{\init}_{\hat S} - \beta^*_{\hat S} \|_1 \le \|(\mathbf X_{\hat S}^T \mathbf X_{\hat S})^{-1}\mathbf X_{\hat S}^T\mathbf X_{S_O}\beta_{S_O}^*\|_1 +  \|(\mathbf X_{\hat S}^T \mathbf X_{\hat S})^{-1}\mathbf X_{\hat S}^T(y - \mathbf X\beta^*)\|_1. 
\end{equation}
We proceed by first bounding the first term on the right-hand side of (\ref{eq: beta_init_ineq}). By Assumption \ref{a: ols},
\begin{align*}
\|(\mathbf X_{\hat S}^T \mathbf X_{\hat S})^{-1}\mathbf X_{\hat S}^T\|_2 &\le \sqrt{\nu_{\max}[(\mathbf X_{\hat S}^T \mathbf X_{\hat S})^{-1}\mathbf X_{\hat S}^T\mathbf X_{\hat S}(\mathbf X_{\hat S}^T \mathbf X_{\hat S})^{-1}]} \\
&= \sqrt{\nu_{\max}[(\mathbf X_{\hat S}^T \mathbf X_{\hat S})^{-1}]} \\
&= \sqrt{\frac{1}{N}\nu_{\max}\left[(\mathbf X_{\hat S}^T \mathbf X_{\hat S}/N)^{-1}\right]} \\
&= \sqrt{\frac{1}{N}/\nu_{\min}\left(\mathbf X_{\hat S}^T \mathbf X_{\hat S}/N\right)} \le \frac{1}{\sqrt{N \psi_-(N', S)}} \lesssim \frac{1}{\sqrt{N}} .
\end{align*}
This, in turn, implies that
\begin{align*}
\|(\mathbf X_{\hat S}^T \mathbf X_{\hat S})^{-1}\mathbf X_{\hat S}^T\mathbf X_{S_O}\beta_{S_O}^*  \|_1 &\le
\sqrt{N' + d}\|(\mathbf X_{\hat S}^T \mathbf X_{\hat S})^{-1}\mathbf X_{\hat S}^T\mathbf X_{S_O}\beta_{S_O}^*  \|_2 \\
&\le \sqrt{N' + d}\|(\mathbf X_{\hat S}^T \mathbf X_{\hat S})^{-1}\mathbf X_{\hat S}^T\|_2 \|\mathbf X_{S_O}\beta_{S_O}^* \|_2 \\
&\le \sqrt{N' + d}\|(\mathbf X_{\hat S}^T \mathbf X_{\hat S})^{-1}\mathbf X_{\hat S}^T\|_2 \|\mathbf X_{S_O}\|_F \|\beta_{S_O}^* \|_\infty \\
&\le  \sqrt{\frac{(N' + |S|)d}{{ \psi_-(N', S)}}}\frac{2\xi\lambda_L}{(\xi + 1)\zeta} \lesssim \sqrt{\frac{d^2q\log(p)}{M}} = o(1),
\end{align*}
where the last relation follows from Assumption \ref{a: nprime}. 

We proceed to bound the second component on the right-hand side of (\ref{eq: beta_init_ineq}). We observe that
\begin{align*}
\|(\mathbf X_{\hat S}^T \mathbf X_{\hat S})^{-1}\mathbf X_{\hat S}^T\epsilon\|_1 &\le \sqrt{N' + d} \|(\mathbf X_{\hat S}^T \mathbf X_{\hat S})^{-1}\mathbf X_{\hat S}^T(y - \mathbf X\beta^*)\|_2\\ 
 &=  \sqrt{N' + d}\left\|(\mathbf X_{\hat S}^T \mathbf X_{\hat S}/N)^{-1}\frac{\mathbf X_{\hat S}^T(y - \mathbf X\beta^*)}{N}\right\|_2 \\
&\le  \sqrt{N' + d}\vertiii{(\mathbf X_{\hat S}^T \mathbf X_{\hat S}/N)^{-1}}_2 \left\|\frac{\mathbf X_{\hat S}^T(y - \mathbf X\beta^*)}{N}\right\|_2 \\
&\le (N'  + d)\vertiii{(\mathbf X_{\hat S}^T \mathbf X_{\hat S}/N)^{-1}}_2 \left\|\frac{\mathbf X_{\hat S}^T(y - \mathbf X\beta^*)}{N}\right\|_\infty  \\
&\le  \frac{N'  + d}{\kappa_-(N', S)} \lambda_L \lesssim \sqrt{\frac{d^2q\log(p)}{M}} = o(1).
\end{align*}
From Lemma \ref{lem: shat},
\begin{equation*}
\|\hat \beta^{\init}_{S_O} - \beta^*_{S_O} \|_\infty  = \|\beta^*_{S_O}\|_\infty \lesssim \sqrt{\frac{q \log(p)}{M}},
\end{equation*}
which implies that 
\[\|\hat \beta^{\init}_{S_O} - \beta^*_{S_O} \|_1 \lesssim \sqrt{\frac{d^2q \log(p)}{M}} = o(1).\]
By Lemma \ref{lem: conc}, the event $u^* \le \lambda_L(\xi - 1)/(\xi + 1)$ occurs with probability exceeding $1 - 1/p$. Combined, we obtain the desired result.

\subsection{Empirical Evaluation of Assumption~4}\label{sec:A4emp}
To assess the stringency of Assumption~4 compared to the irrepresentability condition \citep{zhao2006}, we conduct a simulation study similar to \citet{zhao2006}, but customized to our mixed linear model setting. 

Consider the model $$y = X\beta^\ast + Z \nu + \varepsilon,$$ with $X \in \mathbb{R}^{nM \times p}$ and $Z \in \mathbb{R}^{nM \times qM}$, $\beta \in \mathbb{R}^p$, $\nu \in \mathbb{R}^p$ and $\varepsilon \in \mathbb{R}^{nM}$. 
We consider $q = 2$ random effects, $M = 25$ groups, and $n = 20$ samples within each group. Among the $p$ fixed effect covariates, we set $d$ to have nonzero coefficients and the rest to have zero coefficients. More specifically, we set $\beta^\ast = (\underbrace{1,\ldots,1}_{d},\underbrace{0,\ldots,0}_{p-d})^\top$.

To assess the stringency of the two assumptions, in each of $B=1000$ simulation replications, we randomly generate design matrices $X$ and $Z$ jointly as $[X, Z_u] \sim_{iid} N(0, \Sigma)$, where $Z_u$ is the un-blocked version of $Z$. The covariance matrix $\Sigma$ is generated from a $\mathrm{Wishart}(p+q, I_{p+q})$ distribution, and $X$ and $Z$ are scaled such that $\| x_j \|_2^2 = nM$ and $\| z_j \|_2^2 = n$. Following \citet{zhao2006}, we consider $p = 2^k$ for $k \in \{3,4,\ldots,8\}$, and set $d = t p/8$ for $t \in \{1,2,\ldots,8\}$. 

Let $A^\ast$ be the index of the true active set (hence $|A^\ast| = d$). Further, let $S \subset (A^\ast)^c$ with $|S| = \min(p-d, p)$ be a random subset of variables with zero coefficients. For $j \in A^\ast$, let $\widetilde{X} = [X_{A^\ast\backslash\{j\}}, X_S, Z]$ be the augmented design matrix, and denote $\widehat{\Sigma} = \widehat{X}^\top \widehat{X}/N$. 

With the above notations, the irrepresentability condition is satisfied if $T_\mathrm{IR} \equiv \max_{j \in A^\ast} T_{\mathrm{IR},j} < 1$, where 
\[
	T_{\mathrm{IR},j} = 
		\| 
			\widehat{\Sigma}_{\left( A^\ast \backslash\{j\} \right)^c, A^\ast \backslash\{j\}} \left( \widehat{\Sigma}_{\left( A^\ast \backslash\{j\} \right)^c, A^\ast \backslash\{j\}} \right)^{-1} \text{sign}\left(\beta_A^\ast \backslash\{j\}\right)
		\|_\infty.
\]
Assumption~4 involves a related quantity, $T_{4,j} = \| \Gamma_{\widetilde{X}} x_j \|_\infty$, where $\Gamma_{\widetilde{X}} D_{\widetilde{X}} \Gamma_{\widetilde{X}}^\top$ is the eigen-decomposition of $\widetilde{X}\left(\widetilde{X}\widetilde{X}^\top\right)^{-1}\widetilde{X}^\top$. 
However, this assumption is satisfied if $T_4 \equiv \max_{j \in A^\ast} T_{4,j} = O(1)$. Thus, to satisfy Assumption~4, we need a constant $C$, not dependent on $N$ and $p$, such that $T_4 < C$. While $C$ can be any large but fixed constant, in this simulation we consider a moderate value of $C = 5$. 

The proportion of simulated data sets, where the irrepresentability assumption and Assumption~4 are satisfied are shown in Tables~\ref{tbl:TIRprop} and \ref{tbl:T4prop}, respectively. As in \citet{zhao2006}, the results in Table~\ref{tbl:TIRprop} indicate that the irrepresentability assumption can be stringent, especially as the dimension $p$ and the number of nonzero coefficients $d$ increase. In contrast, the results in Table~\ref{tbl:T4prop} suggest that, for $C=5$, Assumption~4 is much more likely to hold. Moreover, the proportion of cases for which this assumption holds does not change with $p$ or $d$. While the appropriate choice of $C$ is generally unknown, the results in this simulation suggest that even with moderate values (in this case $C=5$) Assumption~4 is likely satisfied. 

\begin{table}[h]
\centering
\caption{Proportions of cases satisfying the irrepresentability condition \citep{zhao2006}.}\label{tbl:TIRprop}
\begin{tabular}{ l |c|c|c|c|c|c }
$T_{\text{IR}} < 1$ & $p=8$ & $p=16$ & $p=32$ & $p=64$ & $p=128$ & $p=256$\\
\hline
$d=p/8$ & 1 & 1 & 0.975 & 0.823 & 0.327 & 0.022 \\
$d=2p/8$ & 1 & 0.735 & 0.283 & 0.013 & 0 & 0 \\
$d=3p/8$ & 0.736 & 0.231 & 0.014 & 0 & 0 & 0 \\
$d=4p/8$ & 0.356 & 0.062 & 0.001 & 0 & 0 & 0 \\
$d=5p/8$ & 0.244 & 0.024 & 0 & 0 & 0 & 0 \\
$d=6p/8$ & 0.208 & 0.015 & 0 & 0 & 0 & 0 \\
$d=7p/8$ & 0.199 & 0.012 & 0 & 0 & 0 & 0 \\
\end{tabular}
\end{table}

\begin{table}[h]
\centering
\caption{Proportions of cases satisfying Assumption~4.}\label{tbl:T4prop}
\begin{tabular}{ l |c|c|c|c|c|c }
$T_{\text{4}} < 5$ & $p=8$ & $p=16$ & $p=32$ & $p=64$ & $p=128$ & $p=256$\\
\hline
$d=p/8$ & 0.998 & 0.999 & 0.997 & 0.996 & 0.994	& 0.99 \\
$d=2p/8$ & 1 & 1 & 0.996 & 0.995 & 0.99 & 0.982 \\
$d=3p/8$ & 0.998 & 0.999 & 0.994 & 0.992 & 0.985 & 0.973 \\
$d=4p/8$ & 0.999 & 0.998 & 0.991 & 0.995 & 0.984 & 0.954 \\
$d=5p/8$ & 0.996 & 0.995 & 0.993 & 0.989 & 0.976 & 0.938 \\
$d=6p/8$ & 1 & 0.997 & 0.993 & 0.987 & 0.967 & 0.93 \\
$d=7p/8$	& 0.996 & 0.995 & 0.995 & 0.988 & 0.956	& 0.927 \\ 
\end{tabular}
\end{table}

\scalebox{0}{%
\begin{tikzpicture}
    \begin{axis}[hide axis]
        \addplot [
        color=red,
        solid,
        line width=0.9pt,
        forget plot
        ]
        (0,0);\label{firstcolour}
            \end{axis}
\end{tikzpicture}%
}%

\scalebox{0}{%
\begin{tikzpicture}
    \begin{axis}[hide axis]
        \addplot [
        color=blue,
        solid,
        line width=0.9pt,
        forget plot
        ]
        (0,0);\label{secondcolour}
            \end{axis}
\end{tikzpicture}%
}%

\scalebox{0}{%
\begin{tikzpicture}
    \begin{axis}[hide axis]
        \addplot [
        color=mycolor2,
        solid,
        line width=0.9pt,
        forget plot
        ]
        (0,0);\label{thirdcolour}
            \end{axis}
\end{tikzpicture}%
}%

\bibliographystyle{abbrvnat} 
{\bibliography{prelim}}

\end{document}